\journalname{Natural Computing}
\def\makeheadbox{{%
\hbox to0pt{\vbox{\baselineskip=10dd\hrule\hbox
to\hsize{\vrule\kern3pt\vbox{\kern3pt
\hbox{\bfseries Manuscript submitted to: Natural Computing}
\hbox{This is a pre-peer review, pre-print version of this paper.}
\kern3pt}\hfil\kern3pt\vrule}\hrule}%
\hss}}}
\begin{document}

\title{A reversible system based on hybrid toggle radius-4 cellular automata and its application as a block cipher
    \thanks{This study was financed in part by the Coordenação de Aperfeiçoamento de Pessoal de Nível Superior – Brasil (CAPES) – Finance Code 001. The authors would also like to thank Conselho Nacional de Desenvolvimento Científico e Tecnológico (CNPq) and Fundação de Amparo a Pesquisa do Estado de Minas Gerais (Fapemig) for supporting this work.}
}


\author{Everton R. Lira \and 
            Heverton B. de Macêdo \and 
            Danielli A. Lima \and 
            Leonardo Alt \and 
            Gina M. B. Oliveira  
}


\institute{ Everton R. Lira \at
                Comp. Science Dept., Federal University of Uberlândia, UFU, Uberlândia, MG, Brazil \\
                (Corresponding Author) - \email{evertonlira@gmail.ccom}           
            \and
            Heverton B. de Macêdo \at
                Comp. Science Dept., Goiano Federal Institute, IF Goiano, Rio Verde, GO, Brazil 
            \and
            Danielli A. Lima \at
                Informatics Dept., Federal Institute of Triângulo Mineiro, IFTM, Patrocínio, MG, Brazil
            \and
            Leonardo Alt \at
                Berlin, Germany
            \and
            Gina M. B. Oliveira \at
                Comp. Science Dept., Federal University of Uberlândia, UFU, Uberlândia, MG, Brazil
}

\date{Received: date / Accepted: date}

\maketitle

\begin{abstract}

The dynamical system described herein uses a hybrid cellular automata (CA) mechanism to attain reversibility, and this approach is adapted to create a novel block cipher algorithm called HCA. CA are widely used for modeling complex systems and employ an inherently parallel model. Therefore, applications derived from CA have a tendency to fit very well in the current computational paradigm where scalability and multi-threading potential are quite desirable characteristics. HCA model has recently received a patent by the Brazilian agency INPI. Several evaluations and analyses performed on the model are presented here, such as theoretical discussions related to its reversibility and an analysis based on graph theory, which reduces HCA security to the well-known Hamiltonian cycle problem that belongs to the NP-complete class. Finally, the cryptographic robustness of HCA is empirically evaluated through several tests, including avalanche property compliance and the NIST randomness suite.

\keywords{cellular automata \and reversibility \and cryptography \and block cipher}
\PACS{}
\subclass{68Q80 \and 94A60}
\end{abstract}

\section{Introduction}
\label{sec:intro}

Cryptography is the everlasting study of methods that ensure confidentiality, integrity and authentication, when storing or transmitting data, to minimize security vulnerabilities. In such approaches, data is codified in a specific way so that only those for whom it is intended can read and process it.

Despite the successful application of reputed algorithms, such as AES [\cite{daemen2002}] and RCA [\cite{menezes1996}], the evolution of hardware architectures imposes an ongoing race to develop more secure and effective encryption models. For example, with the popularization of portable electronic devices able to capture digital images, the exchange of such data between entities on private social networks or e-mails became more frequent, which led to the demand for methods that enable high throughput without loss of security.

Since the most popular symmetric encryption algorithms AES and DES are of serial nature, this poses a challenge to massive data processing [\cite{daemen:05, zeghid2007}]. This motivated a search for the improvement of these classical algorithms [\cite{prasad:13}] as an attempt to introduce parallelism on some costly or redundant steps in the process of encryption [\cite{le:10}]. However, as they are inherently sequential algorithms, this customization is limited and does not allow the desirable level of parallelism to be reached. As such, the capacity of high-performance parallel architectures can become underutilized. In this context, cellular automata (CA) appear as a useful tool in the design of inherently parallel encryption systems.

CA are totally discrete mathematical models based on the livelihood of cellular organisms, a naturally occurring process which dictates the survival of such cells based on their behavior (implemented as CA rules) while interacting with the environment conditions they are exposed to (the cell neighborhood) [\cite{rozenberg2012handbook}]. CA are widely used in the literature [\cite{sarkar2000}] and, among the most known applications, the following can be mentioned: (i) modeling of biological and physical systems [\cite{vichniac1984, ermentrout1993, maerivoet2005, alizadeh2011, ghimire2013, feliciani2016, mattei2018}]; (ii) investigation of new computational paradigms  [\cite{hillis1984, lent1993, morita2008, yilmaz2015}]; 
(iii) proposition of tools for solving various computational problems, such as task scheduling [\cite{swiecicka2006, carneiro2013, carvalho2019}], image processing [\cite{rosin2010}], computational tasks [\cite{mitchell2005, oliveira2009}], robotics [\cite{ioannidis2011, lima2017}] and, more significantly related to this article, cryptographic models [\cite{wolfram1986, gutowitz:95, sen2002cellular}].

The implementation simplicity and the ability to process data in parallel are some of the the main advantages of applying CA-based models in the most diverse areas mentioned above [\cite{vasantha2015}]. In addition, the discovery that even the simplest CA models, known as elementary, are capable of exhibiting chaotic-like dynamics [\cite{wolfram1986}], led researchers to see CA-based models as natural options for proposing fast, parallel and secure encryption methods [\cite{wolfram1986, gutowitz:95, sen2002cellular}].

A new cryptographic model called HCA (Hybrid Cellular Automata) is investigated here, which is based on chaotic one-dimensional CA rules. The HCA model recently received a patent registration in Brazil [\cite{oliveira2007sistema}], and this paper presents an unique detailed view on how the parameters of HCA were defined and on the investigations performed to validate its safety.

This model employs pre-image computation (the backward evolution of CA configuration) in the encryption process and applies one-dimensional CA rules with sensitivity to one of the extreme cells in the neighborhood, the so-called toggle rules, such as the method proposed by Gutowitz [\cite{gutowitz:95}]. Furthermore, this innovative approach also addresses two problems pointed out in relation to this previous work [\cite{gutowitz:95}]: the spread of plaintext disturbances in only one direction and the significant increment of bits in the successive pre-image computations performed during the encryption process. Moreover, several evaluations and analyses performed on the model are presented here, such as theoretical discussions related to its reversibility and an analysis based on graph theory, which reduces HCA security to the well-known Hamiltonian cycle problem. The cryptographic robustness of the model is empirically evaluated through some security analysis, such as, the avalanche property compliance and the NIST randomness suite.

Even though later models, also based on CA toggle rules, sought to reduce these problems [\cite{wuensche2008, oliveira:04, oliveira:08, oliveira:10, silva2016}], the solution proposed here is the only one that guarantees an appropriate propagation of the disturbance over the entire lattice, as well as keeping the size of the ciphertext the same as the plaintext. Aiming at ensuring a good perturbation propagation, the model investigated here uses a lattice with a periodic boundary condition, which allows a simple 1-bit disturbance to be propagated throughout the entire lattice, regardless of the position where this perturbation occurs and the direction of the rule sensitivity. 

Moreover, to make sure there is always a pre-image for any lattice configuration and also to avoid the bit increase seen in previous models, the HCA model is heterogeneous and applies two distinct rules, both of which are toggle rules sensitive to the same direction. The so-called main rule possesses chaotic dynamics and is used on most bits of the lattice, whereas the so-called border rule possesses fixed-point dynamics with spatial displacement and is used on a small number of consecutive bits, which are called the lattice border. At the same time that the main rule guarantees the injection of appropriate entropy in the lattice as the pre-images are calculated, the border rule guarantees the existence of a single pre-image for each configuration. 

In addition, the position of the cells that are characterized as the border of the lattice vary for each consecutive pre-image computation in order to assure that every cell was evolved by the main chaotic rule at various steps. This variation is also made in order to promote a higher level of parallelism in the ciphering stage. Using the scheme proposed in HCA, each subsequent pre-image computation can be started soon after the initial bits of the previous pre-image calculus are known.

Section \ref{sec:related} presents a review of the main works in the literature related to the investigated method. Section \ref{sec:proposedsolution} formally presents the HCA model and details all of the processes involved in its proposition. Section \ref{sec:reversibility} presents a theoretical aspect related to HCA: the proof that the hybrid CA model used in HCA is reversible, unlike the model used by Gutowitz [\cite{gutowitz:95}] that is irreversible. Section \ref{sec:formal} presents a formal analysis of the model based on graph theory, which associates the problem of breaking the HCA key with the problem of finding a Hamiltonian cycle in a graph, which belongs to the NP-complete class. Section \ref{sec:security} describes several analyses established in the literature to verify the security of a cryptographic method, presenting the suitability of each one in the evaluation of HCA. The experimental results obtained in three of the analyses described in section 6 are presented in section \ref{sec:experiments} for the validation of HCA security against cryptanalysis attacks: plaintext avalanche effect, key avalanche effect and NIST suite tests. Finally, Section \ref{sec:conclusion} presents the main conclusions of our investigation about the HCA cryptographic method and proposes some future directions in this research.

\section{Related Work}
\label{sec:related}

The first suggestion about the employment of cellular automata models in cryptography was made by Wolfram [\cite{wolfram1985}], after his studies on the statistical properties of CA chaotic rules with radius 1, which can be used as pseudo-random number generators [\cite{wolfram1986}]. Since then, various studies on this topic have been made [\cite{tomassini2000, sen2002cellular, vasantha2015, oliveira2007sistema, benkiniouar2004, nandi1994, gutowitz:95, oliveira:04, oliveira:08, oliveira:10, silva2016, wolfram1985, wuensche2008, oliveira_icsc:10, wuensche:92, seredynski2004, yang2016novel}], where the cryptographic models can be classified into three kinds of approaches. 

The first approach, proposed by Wolfram, takes advantage of the good pseudo-random properties of known transition rules with chaotic behavior to generate random binary sequences. Therefore, the rules are not used as the cryptographic key, which, in fact, corresponds to the initial lattice. This lattice is evolved by a pre-specified chaotic rule (elementary rule 30) and the sequence of bits generated in a specific cell is used as a pseudorandom sequence. Moreover, the effective ciphering process is made by a reversible function that mixes the plaintext with the random sequence, such as the XOR logical function [\cite{wolfram1985, wolfram1986, tomassini2000, benkiniouar2004, nandi1994}]. On the contrary, the HCA model discussed here employs transition rules as secret keys and the initial lattice corresponds to the plaintext. More recently, this approach was diversified, for example, by using different one-dimensional transition rules with radius 1 and 2 and also two-dimensional rules, and using  evolutionary search for finding suitable chaotic rules [\cite{seredynski2004, tomassini2001, sirakoulis2016, toffoli1987, kari1992, machicao2012chaotic, john2020design}]. Another line of investigation is the parallelization of cellular automata as pseudo-random number generators that can be applied in cryptographic schemes [\cite{sirakoulis2016}]. 

The second approach is based on additive, non-homogeneous and reversible CA rules. The cryptographic keys are typically a combination of known additive rules [\cite{toffoli1987}] that exhibit algebraic properties. When such rules are used together in a heterogeneous scheme, they exhibit a periodic dynamics with maximum and/or known cycle [\cite{nandi1994, kari1992}]. However, the parallelism and safety of these models are limited, due to the additive property of the rules, which prevents the chaoticity of the rules. The system proposed in [\cite{nandi1994}] was broken in [\cite{blackburn1997}] by analyzing the additive properties of the rules. More recently proposed systems based on this line of research have been mixing additive rules and nonlinear rules to circumvent the security problems of their predecessors [\cite{das2010generating}].

The last approach uses the backward evolution of the CA lattice to cipher the plaintext. The cryptographic key is the CA transition rule and it must have some properties to ensure the pre-image existence [\cite{oliveira:08, wuensche2008, oliveira_icsc:10, wuensche:92}]. Gutowitz was the first to propose a cryptographic model using such approach; it is based on the backward evolution of irreversible homogeneous CA [\cite{gutowitz:95}]. The cryptographic model discussed here also uses the backward evolution. However, in the novel HCA method, the rules are reversible and they are applied in a scheme where two different rules are used to ensure the pre-image existence, defining a heterogeneous CA model. Therefore, we further detail the state of art related to CA-based models that belong to the third approach.

Gutowitz’s model employs CA toggle rules, which are used as cryptographic keys (or a part of these). Such rules are sensitive to the leftmost and/or to the rightmost cell in the neighborhood. That means any modification to the state of this cell necessarily causes a modification on the central cell. A pre-image of an arbitrary lattice of size N is calculated adding R extra bits to each side and a pre-image will be calculated with N + 2R  cells. If a right-toggle rule transition is used as key, the pre-image cells can be obtained in a deterministic way, step-by-step, from the leftmost side to the right [\cite{gutowitz1994method}]. The plaintext corresponds to the initial lattice and P pre-images are calculated to obtain the ciphertext. As 2R bits are added to each pre-image calculated, the size of the final lattice is given by N + 2RP. Such non-negligible increment is pointed as the major drawback of this model. Moreover, another flaw was identified in it, a high degree of similarity between ciphertexts was observed when the plaintext is submitted to a little perturbation. To deal with this problem, the model employs two phases where a left-toggle and a right-toggle rule are applied in each stage. Both rules are generated starting from the same cryptographic key, however, it needs more time steps to cipher the plaintext. Later on, this model was altered by using bidirectional toggle CA rules (to the right and to the left simultaneously) in [\cite{oliveira:04}], showing that the similarity flaw was solved with such a modification and that it is protected against differential cryptanalysis. However, the ciphertext increment in relation to the plaintext length remains in this model.

An algorithm known as reverse algorithm was proposed in [\cite{wuensche:92}] for a pre-image computation starting from any lattice and applying an arbitrary transition rule (not only toggle rules). However, using a periodic boundary CA, the pre-image computation is concluded verifying whether the initial bits can be equal to the final 2R rightmost ones. If so, the extra bits are discarded returning the pre-image to the same size of the original lattice. If no, this pre-image does not exist. This algorithm finds all the possible pre-images for any arbitrary periodic boundary lattice, if at least one exists. This reverse algorithm was evaluated as an encryption method in [\cite{oliveira:08}] and [\cite{wuensche2008}]. However, since there is no guarantee of pre-image existence for all possible rule transitions, the major challenge in these previous models was to evaluate the characteristics of the rules to assure the existence of at least one pre-image for any possible lattice. An attempt to solve this problem was to use the Z parameter [\cite{silva2016}] in the rule specification. The method proposed in [\cite{wuensche2008}] is very similar to the initial method proposed in [\cite{oliveira:08}], despite being developed independently. The major conclusion in [\cite{wuensche2008}] is that the simple adoption of the reverse algorithm is not viable because the possible rules with 100\% guarantee of pre-image existence are not appropriate for ciphering, even when using the Z parameter to choose suitable secret keys. No treatment to this problem was addressed in [\cite{oliveira:08}], that is, how to proceed if a failure occurs when computing pre-images. It is an important point to discern the works in [\cite{wuensche2008}] and [\cite{oliveira:08}]. An alternative approach to use the reverse algorithm by adopting a wrap procedure was later investigated in [\cite{oliveira_icsc:10}]. This contour procedure ensures any plaintext can be encrypted. However, it generates a variable size ciphertext, which can be larger than the plaintext. Later on, it was shown that an appropriate specification of the  secret key gives a low probability to this failure occurrence [\cite{oliveira:10}], expecting to rarely apply this contour procedure keeping the ciphertext size close to the plaintext. This specification was deeper investigated in [\cite{oliveira2010_ppsn}] and [\cite{oliveira2011deeper}]. Additionally, a cryptographic model that employs a lattice with a fixed extra boundary was investigated in [\cite{silva2016}], which applies the reverse algorithm proposed by Wolfram. Even though the lattice increase is smaller than in Gutowitz’s model, the final lattice is still larger than the plaintext, which increases the cost of sending encrypted information, in addition to the aperiodic condition of the lattice hindering the good propagation of disturbances.

As far as we know, the first CA model that uses backward evolution with chaotic toggle rules and that has 100\% of preimage calculus keeping the ciphertext with the same size of the plaintext (using a periodic boundary condition) is the one discussed in this paper. HCA was first proposed in [\cite{macedo2007}] and a patent registration was submitted to the Brazilian agency of patents (INPI) in 2007, which has been recently accepted in 2019 [\cite{oliveira2007sistema}]. Meanwhile, other academic works have investigated different aspects of HCA and propose some adaptations of this CA-based model [\cite{magalhaes2010metodo, lima2012modelo, alt2013propriedades}]. Some of the analyses over HCA investigated in these works are presented here. More recently, a new model inspired by HCA was proposed, replacing the cellular automata structure by complex networks connections [\cite{macedo2014}]. In spite of some advantages related to the fast propagation of information promoted by non-local connections, the intrinsic parallelism of CA models is not presented in the model based on complex networks.

\section{HCA Method Description}
\label{sec:proposedsolution}

The HCA method consists of a symmetric block-based cryptographic system that uses the dynamic behavior of CAs to perform the cipher and decipher process. Both forward and backward (pre-image) evolution of CAs are essential parts of this algorithm.

\subsection{HCA - Block Size Definition}
\label{sub:proposedSolution_blockSizeDefinition}

In HCA, 128-bit blocks are used for the cipher and decipher process. The method could be easily adapted to be used with other block sizes, but this value was set to conform with the current standard for symmetric cryptography methods.

Like all block-cipher methods, the HCA method is compatible with every mode of operation described in the literature, such as ECB, CBC, OFB, CFB, CTR, among others [\cite{NIST2014}]. Despite this, the use of ECB and CBC are discouraged due to the publicly known inherent vulnerabilities caused by ECB [\cite{rogaway2011evaluation}] and to the existence of padding oracle attacks applicable when using CBC [\cite{vaudenay2004security}].

\subsection{HCA - Cryptographic Key Definition}
\label{sub:proposedSolution_cryptographicKeyDefinition}

The HCA cryptographic key ($ K $) is formed by a 257-bit sequence. 
Moreover, the initial 256 bits of the cryptographic key are used to produce radius-4 CA transition rules ($ r=4 $), which have $ 512 $ bits. An explanation on how CA rules are derived from keys is provided in section \ref{subsec:generating_rules}. 

The total space of possible cryptographic keys is formed by $ 2^{256} $ left-toggle rules and other $ 2^{256} $ right-toggle rules. However, some of them are discarded for this approach because they do not produce the desired dynamic behavior. The normalized spatial entropy ($ h $) calculated on the initial 256 bits of a potential key must be greater than 0.75 ($ h > 0.75 $) for it to be considered a suitable key. It is calculated by the Expression (\ref{eq_1}), where $ p_i $ is the probability of an 8-bit substring occurring in the 256-bit sequence, which is evaluated for every possible 8-bit binary combination through the summation.

\begin{equation}
\label{eq_1}
h=\frac{-\sum_{i=1}^{256} {(p_i \times \log_{2}(p_i))}}{8}
\end{equation}

Setting the cryptographic key entropy above $ 0.75 $ causes the HCA method to generate cellular automaton toggle rules with chaotic dynamics as shown in [\cite{oliveira2010_ppsn}]. This kind of rule does not have an easily identifiable pattern during CA evolution and therefore makes the resulting ciphertext enough harder to decipher when the cryptographic key is not known.

On the contrary, keys with $ h\leq0.75 $ are discarded. This reduction of the valid key space is estimated to be quite low when compared to the total potential key space ($ 2 \times 2 ^ {256} $). Table \ref{subsec:ValidKeys_relation} presents the relation between the amount of bits used in the key ($ \vert K \vert $) and the percentage of discarded keys.

\begin{table}[!ht]
	\centering
	\caption{Key bits amount $\times$ Discarded keys percentage}
	\label{subsec:ValidKeys_relation}
	\begin{tabular}{|c|c|p{2.5cm}|p{2.5cm}|}
		\hline
		\textbf{Radius} & \textbf{$\vert K \vert$}   & \textbf{Keyspace}  & \textbf{Discarded (\%)}  \\ \hline
		\textbf{1}	&\textbf{4-bits} 	& $2\times{2}^{4} = 32$			& 25 \%  \\ \hline
		\textbf{2}	&\textbf{16-bits}	& $2\times{2}^{16} = 131072$	& 8.64 \%  \\ \hline
		\textbf{3}	&\textbf{64-bits}	& $2\times{2}^{64}$				& $\approx 0.113$ \%  \\ \hline
		\textbf{4}	&\textbf{256-bits}	& $2\times{2}^{256}$			& $\approx < 0.1 \times {10}^{-8}$ \%  \\ \hline
	\end{tabular}
\end{table}

In Table \ref{subsec:ValidKeys_relation} the percentages listed for $ r = 1 $ and $ r = 2 $ are absolute, since all the possible keys were tested. An analysis of the entire keyspace for $ r \geq 3 $ is impractical, but extrapolations based on random sampling are presented for $r = 3$ and $r = 4$. For both estimates, ${2}^{32}$ keys were randomly generated and evaluated in regard to the acceptance criteria ($ h > 0.75 $). An apparent correlation can be observed between increasing the CA radius and a reduction in the percentage of discarded keys. The estimated $r = 4$ discarded keys percentage suggests that only a minimal set of very homogeneous keys would be rejected in the vast $2\times{2}^{256}$ keyspace.

\subsection{HCA - Defining Operations}
\label{sub:proposedSolution_definingOperations}

Considering each 128-bit block, the binary sequence is the initial lattice configuration, $t = 0$, for the CA. The cipher procedure equals to applying the reverse evolution operation ($ \lambda $), also known as pre-image calculus, for 128 steps until the configuration $t = -128$ is reached. Two CA rules derived from the cryptographic key ($ K $) are applied at each step. The decipher process is performed through the forward CA evolution operation ($ \phi $), using the same set of rules employed in the cipher procedure.

\subsubsection{Generating CA Rules from a Key}
\label{subsec:generating_rules}

By definition, cellular automata rules can be expressed through mappings from the CA neighborhood bit values to the single bit result value. The HCA method employs a specific subset of CA rules (also called toggle rules) which ensures that every configuration will have a single pre-image, and that this pre-image can be calculated in a deterministic manner through the lattice reverse evolution. Mappings of toggle rules display a characteristic of the result bit being sensible to value changes in either extremity (or both extremities) of the CA neighborhood. So, considering a left-toggle CA rule, in all mappings specified by this rule, value changes in the leftmost bit of the neighborhood will change the output central bit for the new resulting configuration. Similarly, distinct values on the rightmost bit in a CA neighborhood will surely provide distinct resulting bit values when a right-toggle rule is applied. Figure \ref{fig-sensibilidadeRegra} provides examples of this concept.

\begin{figure}[!ht]
  \begin{center}
    \includegraphics[width=0.7\columnwidth]{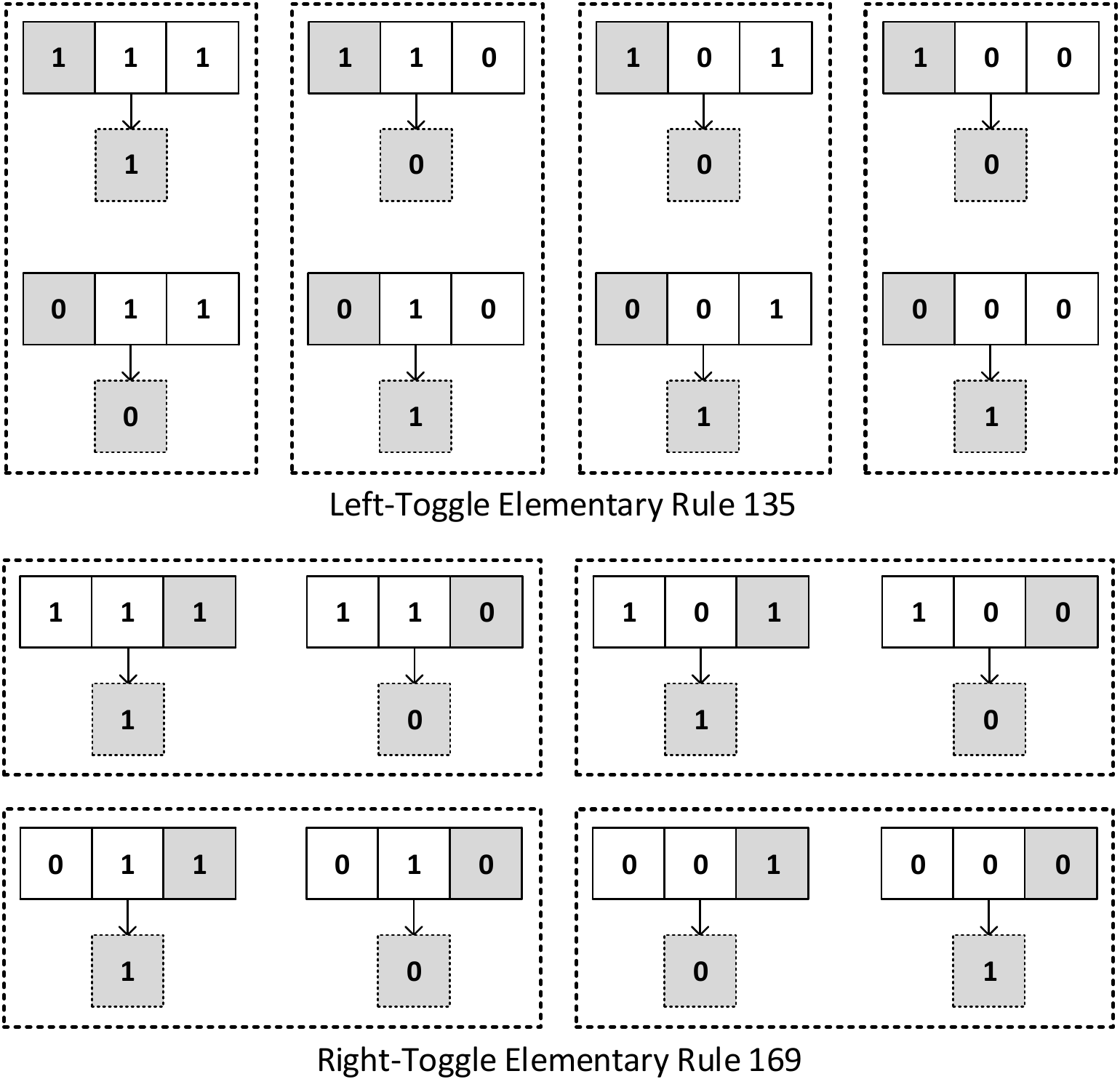}
  \end{center}
  \caption{Example of CA toggle rules.}
  \label{fig-sensibilidadeRegra}
\end{figure}

The CA neighborhood mappings specified by elementary rules 135 and 169 are displayed in Figure \ref{fig-sensibilidadeRegra}. Rule 135, presented at the top, is a left-toggle rule and, as such, mappings in which the neighborhood differs only on its leftmost bit value will always result in distinct output bit values. Likewise, for the right-toggle rule 169, value changes in the rightmost bit of the neighborhood will alter the resulting bit.

Taking into account the speed of commercially available hardware, HCA employs radius-4 rules ($ r = 4 $) to ensure a large keyspace ($2\times{2}^{256}$) deemed appropriate against a brute force attack. Radius-4 rules are made up of 512 bits, but toggle rules of this radius can be derived from any 256-bit sequence. In the HCA method, the last digit of the 257-bit cryptographic key is responsible for defining the toggle direction of the employed rules and the first 256-bits are used to generate radius-4 toggle rules with the desired dynamic behavior. Two CA rules are used at each evolution step, which will now be defined as the main rule ($ \phi_{m} $) and border rule ($ \phi_{b} $). 

Consider a cryptographic key $K$ of 257-bits, so that $ K = K[0], K[1], ..., K[256] $. The $ K[256] $ bit determines the toggle-direction of the generated rules. If $ K[256] = 0 $, left-toggle rules will be generated, otherwise right-toggle rules are produced. The 512-bit CA main rule, $ \phi_{m} $, is derived from $ K $ using Expression (\ref{generateRP}), where the $ + $ sign stands for concatenation and the upper slash indicates a binary complement operation.

\begin{gather}
createK_{m}(K) = \nonumber \\
\label{generateRP}
\left \{ 
    \begin{matrix} 
        K + \overline{K}, & K[256] = 0 \\
        K[0], \overline{K[0]}, \cdots, K[255], \overline{K[255]}, & K[256] = 1 
    \end{matrix} 
\right.
\end{gather}

While $ \phi_{m} $ is derived from the initial 256 bits of $ K $ as shown in Expression (\ref{generateRP}), the 512-bit border rule, $ \phi_{b} $ is selected in a subset of only four rules. The subset is formed by two left-toggle rules $ \{11\cdots1+00\cdots0\}, \{00\cdots0+11\cdots1\} $ and two right-toggle rules $ \{1010\cdots10\}, \{0101\cdots01\} $. Two criteria guide the selection: (1) the first bit of border rule $ \phi_{b}[0] $ must be the complement value of the first bit of main rule $ \phi_{m}[0] $, and (2) the border rule will share the same toggle direction as the main rule. The expression (\ref{generateRB}) define the selection criteria.

\begin{gather}
createK_{b}(K) = \nonumber \\ 
\label{generateRB}
\left \{
    \begin{matrix} 
        11\cdots1 + 00\cdots0, & \phi_{m}[0]=0, K[256]=0 \\
        1010\cdots10, & \phi_{m}[0]=0, K[256]=1 \\
        00\cdots0 + 11\cdots1, & \phi_{m}[0]=1, K[256]=0 \\
        0101\cdots01, & \phi_{m}[0]=1, K[256]=1 \\
    \end{matrix} 
\right.
\end{gather}

These four rules are a unique subset of toggle rules for which there is a direct relation between the bit in the toggle direction extremity of the neighborhood and the output bit; this interrelation is so absolute that the output bit value can be determined regardless of the values in other bits of the input neighborhood, and it is crucial to the pre-image calculus procedure.

\subsubsection{Backward Evolution Operation}

Given an lattice $ s $ at step $ t $ (represented as $ s^{t} $), consider the backward evolution operation (pre-image calculus) as $ \lambda (s^{t}, \phi_{m} , \phi_{b}) = s^{t-1} $. Applying $ \lambda $ to $ s^{t} $ means finding all bits of $ s $ at time $ t-1 $ using the rules $ \phi_{m} $ and $ \phi_{b} $, where $ s^{t-1} = s^{t-1}[0], s^{t-1}[1], \cdots, s^{t-1}[127] $.

Considering a radius-4 rule, the pre-image calculus begins by determining the value of 8 consecutive bits ($b1, b2, b3, b4, b5, b6, b7, b8$) of the pre-image using $ \phi_{b} $. As previously stated, for any bit $s^{t}[i]$ calculated through $ \phi_{b} $ there is a relation of value equality or complement between it and the bit in the relevant extremity of the neighborhood in $t-1$. So by knowing which of these rules is $\phi_{b}$, and assuming it was applied to the $s^{t-1}[i]$ neighborhood, if $\phi_{b}$ is a known left-toggle rule and the value of $s^{t}[i]$ is also available, then $s^{t-1}[i-4]$ can be determined, as shown in Expression (\ref{bitPreImageBorderLeft}); or if $\phi_{b}$ is a known right-toggle rule, 
then $s^{t-1}[i+4]$ can be determined, as displayed in Expression (\ref{bitPreImageBorderRight}).

\begin{equation}
\label{bitPreImageBorderLeft}
s^{t-1}[i] = \lambda_{\phi_{b}}(s^{t}[(i+4) \mod 128])
\end{equation}

\begin{equation}
\label{bitPreImageBorderRight}
s^{t-1}[i] = \lambda_{\phi_{b}}(s^{t}[((i-4) + 128) \mod 128])
\end{equation}

This procedure is used to determine 8 successive bits of the pre-image in $ t-1 $, which are regarded as the border region; each bit calculus depends only on the value of another single cell at step $ t $. This is only possible due to the simplicity of border rule $\phi_{b}$ that imposes a non-chaotic dynamic behavior to these eight cells. Such non-chaotic behavior will not affect the quality of the algorithm, since the border rule will not have a significant influence on the CA dynamic as a whole. The border rule is only used to ensure the existence of a single pre-image for any possible configuration, as proved in Section \ref{sec:reversibility}.

All the other 120-bits of the pre-image are obtained from the main rule mapping $ \phi_{m} $, responsible for provides the desired chaotic behavior to the algorithm. The values of these 120 remaining bits are determined, one by one, in the order displayed in Figure \ref{fig-ordemCalculoPreImagem}.

\begin{figure}[!ht]
  \begin{center}
    \includegraphics[width=0.7\columnwidth]{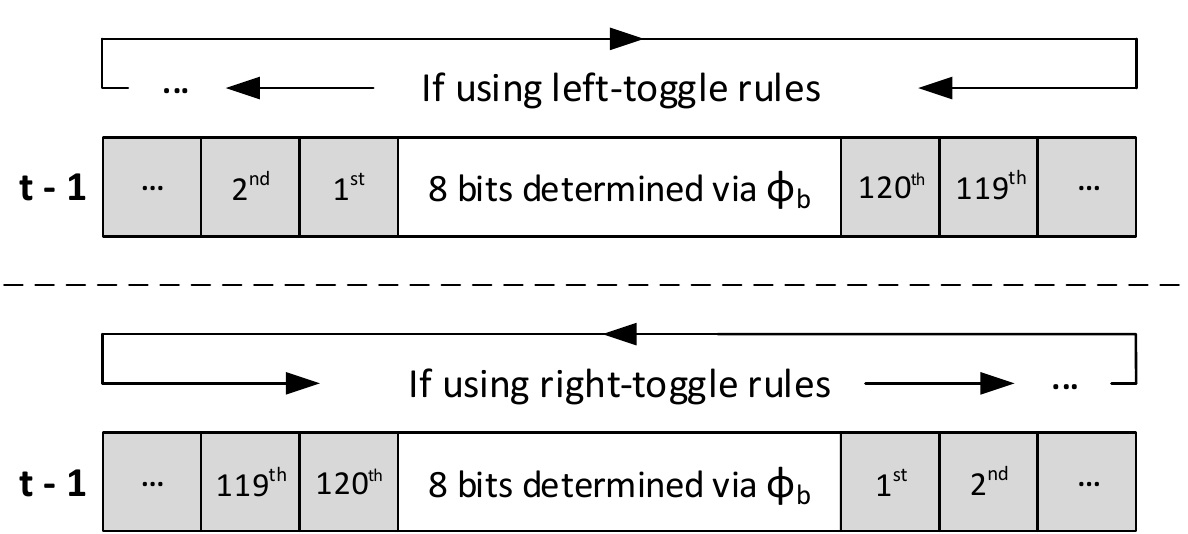}
  \end{center}
  \caption{Order of main bits determination according to toggle direction.}
  \label{fig-ordemCalculoPreImagem}
\end{figure}

If the bit determination order presented in Figure \ref{fig-ordemCalculoPreImagem} were applied to a situation where the last bit of the key $K$ defines the HCA execution toggle direction as ``left'', then the calculation of the first bit using the left-toggle main rule, $ \phi_{m} $, would be as represented in Figure \ref{fig-firstMainBitDeterminationLeft}.

\begin{figure}[!ht]
  \begin{center}
    \includegraphics[width=0.7\columnwidth]{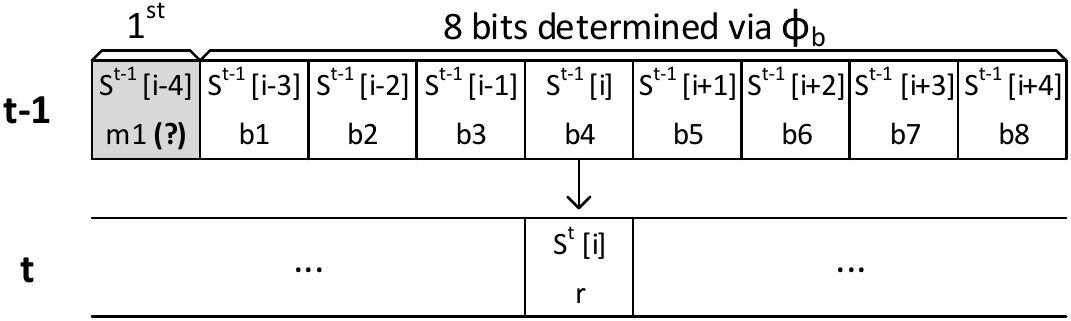}
  \end{center}
  \caption{First main bit determination for left-toggle rule.}
  \label{fig-firstMainBitDeterminationLeft}
\end{figure}

An initial supposition for the context presented in Figure \ref{fig-firstMainBitDeterminationLeft} is that the bit value in position $s^{t}[i]$ has been determined by main rule $ \phi_{m} $. Therefore, there is a valid mapping, specified by $ \phi_{m} $, from the ($m1, b1, \cdots, b8$) values in the $s^{t-1}[i]$ neighborhood ($s^{t-1}[i-4], s^{t-1}[i-3], \cdots, s^{t-1}[i+4]$) to the output value in $s^{t}[i]$, which is $r$.

Since the main rule, $ \phi_{m} $, is a proper radius-4 CA rule, it provides mappings from all possible 9-bit neighborhood combinations to their corresponding output bits, including for $(m1, b1, b2, b3, b4 , b5, b6, b7, b8) = (r)$. Due to the characteristics of the rule $ \phi_{m} $, there is only one value that $m1$ could assume in Figure \ref{fig-firstMainBitDeterminationLeft}. This deterministic procedure is listed in Expression (\ref{bitPreImageMainLeft}) for left-toggle rules and in Expression (\ref{bitPreImageMainRight}) for right-toggle rules.

\begin{equation}
\label{bitPreImageMainLeft}
\begin{matrix}
S^{t-1}[i] = \lambda_{\phi_{m}}(s^{t}[(i+4) \mod 128]] , \\ 
            s^{t-1}[(i+1) \mod 128], \\
            s^{t-1}[(i+2) \mod 128], \\
            \cdots, \\
            s^{t-1}[(i+8) \mod 128])
\end{matrix}
\end{equation}

\begin{equation}
\label{bitPreImageMainRight}
\begin{matrix}
S^{t-1}[i] = \lambda_{\phi_{m}}(s^{t}[((i-4) + 128) \mod 128]] , \\ 
            s^{t-1}[((i-1) + 128) \mod 128], \\
            s^{t-1}[((i-2) + 128) \mod 128], \\
            \cdots, \\
            s^{t-1}[((i-8) + 128) \mod 128])
\end{matrix}
\end{equation}

After determining the $m1$ value for position $s^{t-1}[i-4]$, it can be used to determine $m2$ and, progressively, to determine every one of the 120 main bits of the pre-image. The first steps, and their respective considered neighborhoods, for this operation are presented on Figure \ref{fig-firstMainBitsDeterminationLeft}.

\begin{figure}[!ht]
  \begin{center}
    \includegraphics[width=0.85\columnwidth]{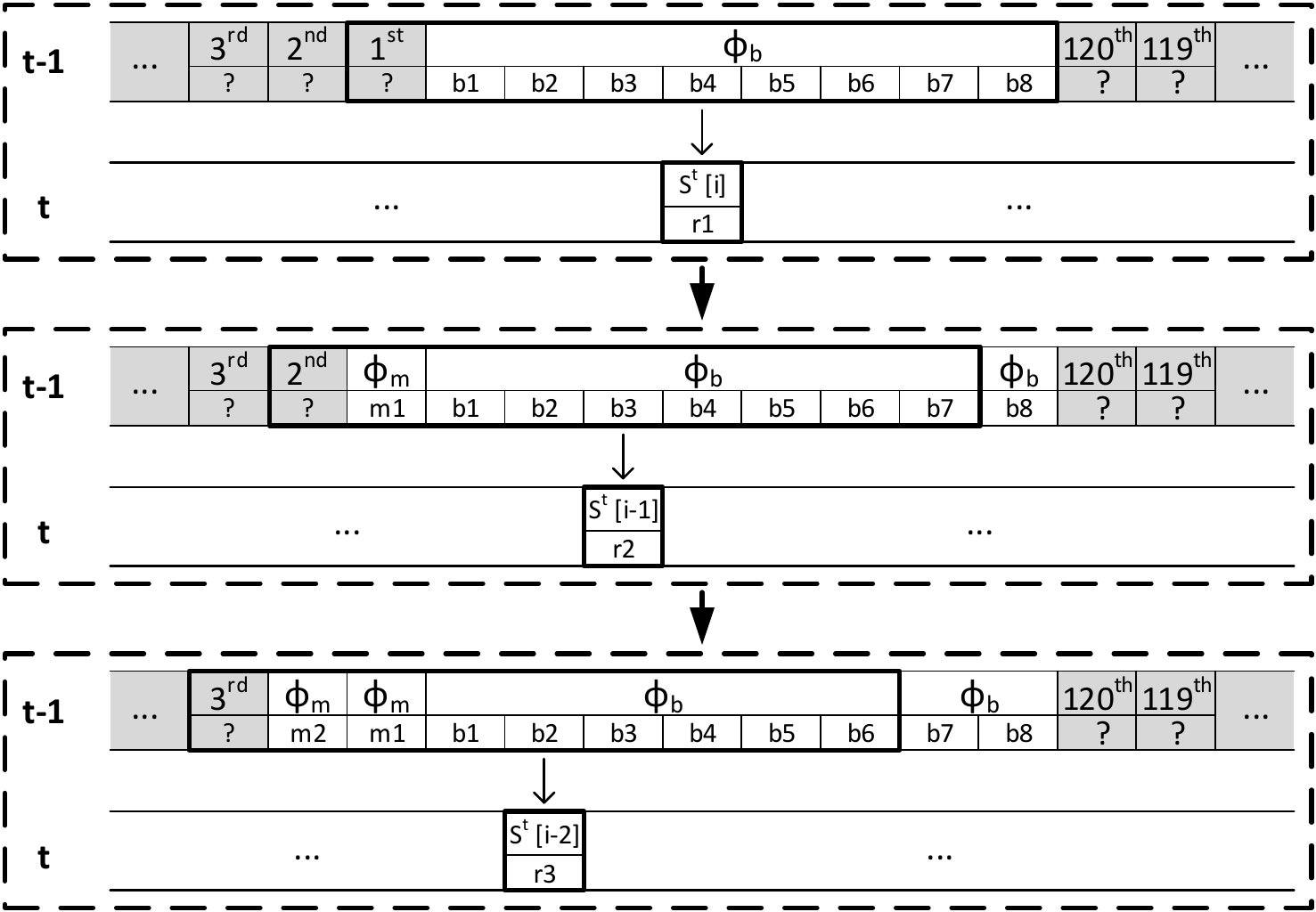}
  \end{center}
  \caption{Next main bits computation for left-toggle rule.}
  \label{fig-firstMainBitsDeterminationLeft}
\end{figure}

The equivalent procedure when using right-toggle rules is easily derivable from the left-toggle rules example, since the main difference is the order in which bits are evaluated, as displayed in Figure \ref{fig-ordemCalculoPreImagem}. The first backward evolution operation ($\lambda$) is concluded after 120 main bits of the pre-image are evaluated, since this means all the 128 bits in configuration $s^{t-1}$ have been determined.

\subsubsection{Forward Evolution Operation}

Given a lattice $ s $ at step $ t $, consider the forward evolution as $ \Phi (s^t, \phi_{m}, \phi_{b}) = s^{t+1} $. Applying $ \Phi $ to $ s^t $ means finding all bits of $ s $ at step $ t+1 $, using the rules $ \phi_{m} $ and $ \phi_{b} $, where $ s^{t+1} = s^{t+1}[0], s^{t+1}[1], \cdots, s^{t+1}[127] $. For each position $i$, the bit value of the cell $ s^{t+1}[i] $ is updated by a rule mapping from the 9-bit neighborhood in $ s^{t}[i] $ considering the main rule ($ \phi_{m} $) or the border rule ($ \phi_{b} $). 

In this forward evolution procedure all cells in $s^{t+1}$ can be determined simultaneously. From the 128 cells, 8 are updated using the border rule, $\phi_{b}$, and 120 cells are updated using the main rule, $\phi_{m}$. A relevant listing of which cells are updated by each rule is provided in Section \ref{sub:proposedSolution_latticeRegions}. The operation $ \Phi (s^t, \phi_{m}, \phi_{b}) $ can be represented by $ \Phi (s^t) $, as displayed in Expression (\ref{bitsEvolutionGeneralDetails}).

\begin{equation}
\label{bitsEvolutionGeneralDetails}
\Phi(s^t, \phi_{m}, \phi_{b}) = \Phi(s^t) = s^{t+1}
\end{equation}

\subsection{HCA - Parallelism and Lattice Regions}
\label{sub:proposedSolution_latticeRegions}

A relevant characteristic of cellular automata is the inherent parallelism of these systems. In a conventional forward evolution procedure, all the cells in a certain lattice $s$ at time step $t$ can be evolved simultaneously to generate the $s^{t+1}$ configuration. Since the decryption process of HCA is based on the forward evolution operation ($\Phi$), with proper hardware it is possible to evolve all 128 cells from the $ s^t $ lattice to the $ s^{t+1} $ lattice in parallel, with a considerable performance gain.

On the other hand, since the encryption process of HCA is based on the backward evolution operation ($\lambda$), a distinct way of achieving parallelism was devised. The HCA encryption is based on applying 128 successive pre-image calculus operations to each block, and, conventionally, the calculus of a $s^{t-1}$ pre-image would only be started after all the bits of $s^{t}$ are known. In this scenario, parallelism is attained by making it possible that bits from distinct pre-images are determined simultaneously.

Expression (\ref{bitsEvolucaoGeral}) indicates which cells are evolved with the main rule and which use the border rule.

\begin{equation}
\label{bitsEvolucaoGeral}
 \Phi(s^t[i], \phi_{m}, \phi_{b}) = \left \{ \begin{matrix} 
        \phi_{b}(s^t[i]), i = \{0, 1, \cdots, 7\} \\ 
        \phi_{m}(s^t[i]), i = \{8, 9, \cdots, 127\}
        \end{matrix} \right.
\end{equation}

According to Expression (\ref{bitsEvolucaoGeral}), 8 cells, $ s[0], s[1], \cdots, s[7] $, are evolved with the border rule ($ \phi_{b} $), and the remaining 120 cells, starting from $ s[8] $ to $ s[127] $, through the main rule ($ \phi_{m} $).

To attenuate any impact of the non-chaotic dynamic behavior of border rules in the quality of the algorithm, the HCA method uses an 8-bit circular shift from the $ s $ block in the opposite direction to the toggle direction set by the last bit of the cryptographic key ($ K [256] $). The Expression (\ref{eq:rotacao8}) defines the operation to be performed between each pre-image calculus.

\begin{equation}
\label{eq:rotacao8}
s \leftarrow Shift_8(s, opposite(toggle\_direction(K[256])))
\end{equation}

Considering this 8-bits circular shift mechanism and that one processing cycle is enough for each cell evaluation when all values needed for such evaluation are available, Figure \ref{fig-HCAparallelism} presents at which processing cycle the first bits of the initial pre-images would be evaluated for a left-toggle execution. 
\begin{figure}[!ht]
    \centering
    \includegraphics[width=0.75\columnwidth]{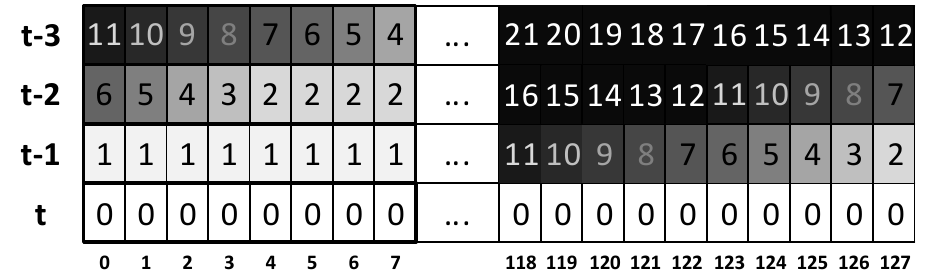}
    \caption{Processing cycles in which pre-image cells are evaluated.}
    \label{fig-HCAparallelism}
\end{figure}

In Figure \ref{fig-HCAparallelism}, lighter color tones indicate cells that are evaluated first, and the number inside each cell is the processing cycle at which the cell value is determined. Through this approach, about 629 cycles would be needed in specialized hardware, to determine the $t - 128$ pre-image, instead of the $128^2 = 16.384$ cycles needed in a purely sequential approach.

\subsection{HCA Method Overview}

The cryptographic key $ K $, used in the process of generating the $ \phi_{m} $ and $ \phi_{b} $ rules, must be applied in a way that generates the same rules for equivalent cipher and decipher steps. Thus, the key $ K $ used at step $ t=1 $ in the cipher process should be the same as the one used at step $ t=128 $ in the decipher process. A scheme illustrating this relation is presented in Figure \ref{fig:detailsHCA}, where the left side displays the encryption process being performed from top to bottom, and the right side of the figure shows the decryption process being performed in the opposite direction.

\begin{figure}
    \centering
    \includegraphics[width=0.8\columnwidth]{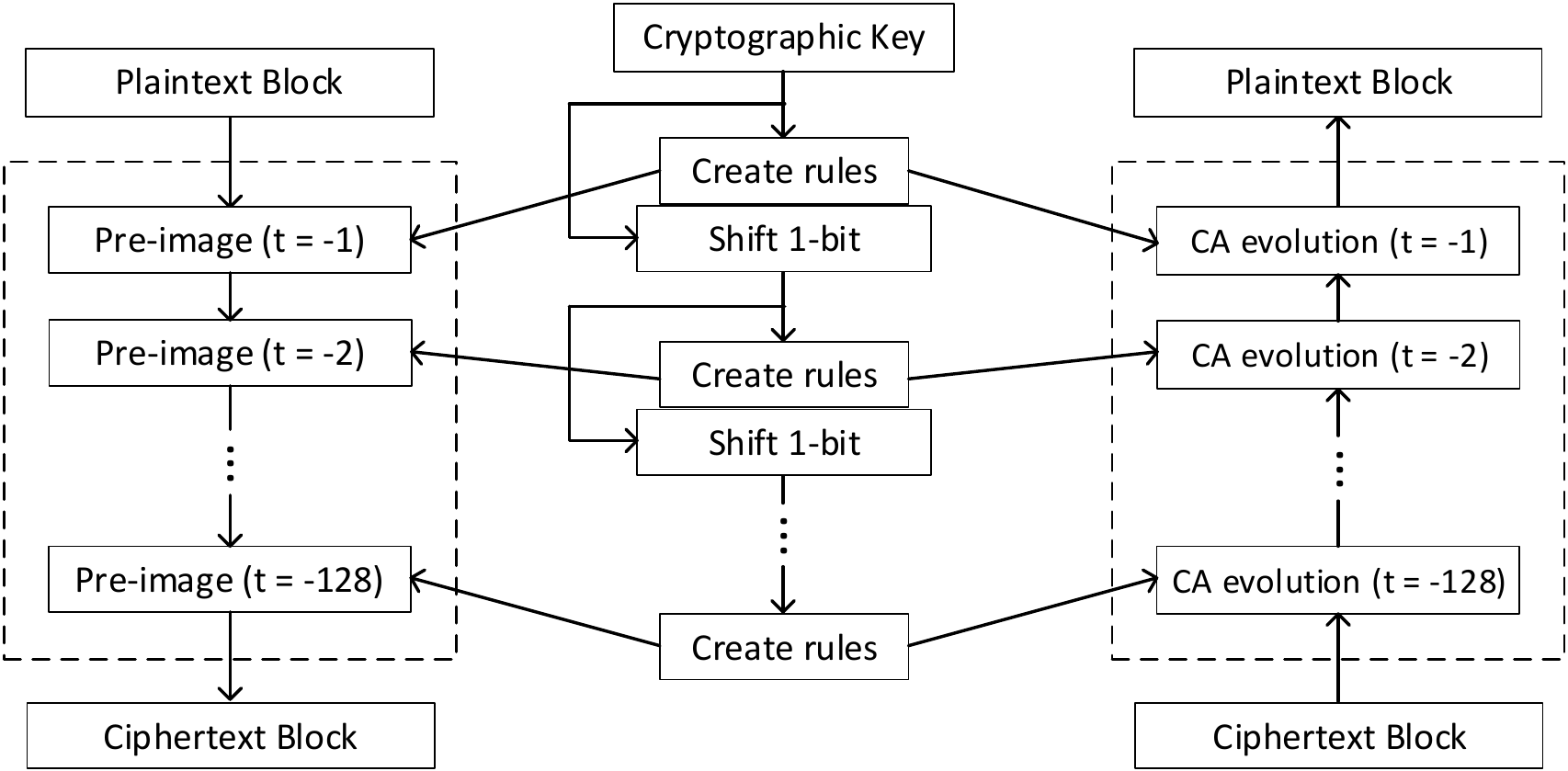}
    \caption{Scheme illustrating the HCA cipher and decipher process.}
    \label{fig:detailsHCA}
\end{figure}

During the ciphering process, at each CA step, the cryptographic key $ K $ is shifted to the left by 1 bit, generating two new rules $ \phi_{m} $ and $ \phi_{b} $. 

It is important to note that, in the deciphering process, the encrypted block ($s^{-128}$) will be used as the initial configuration for the procedure, but the cryptographic key used in this step will be $ Shift_{127}(K, left) $. The original cryptographic key $K$ must be shifted to the left by 127 positions before rules $ \phi_{m} $ and $ \phi_{b} $ are derived from it. At each decryption step it will be necessary to shift this cryptographic key obtained in the previous step to the right by 1 position, so that it becomes equivalent to the key used in the corresponding encryption step.

It is expected that all main rules derived from $K$ in the cipher process have chaotic dynamic behavior, and since many distinct rules are employed in the method, it is harder for a cryptographic attack to exploit the dynamic behavior of a specific rule.

The Algorithm \ref{alg:cipher} presents in pseudo-code the operations performed during the encryption process (backward evolution) and the Algorithm \ref{alg:decipher} shows the decryption process operations (forward evolution).

\begin{algorithm}[!ht]
\label{alg:cipher}
\SetKwInput{Input}{input}
\SetKwInOut{Output}{output}
\Input{A key $K$ and plaintext block $s$}
\Output{cipher block}
\BlankLine
 \caption{HCA algorithm - cipher}
     \SetAlgoLined
     \For{$i \leftarrow 1$ $\KwTo$ $128$}{
        $\phi_{m} \leftarrow createRule_{m}(K)$\;
        $\phi_{b} \leftarrow createRule_{b}(K)$\;
        $s \leftarrow \lambda(s, \phi_{m}, \phi_{b})$\;
        $s \leftarrow Shift_8(s, opposite(toggle\_direction(K[256])))$\;
        $K \leftarrow Shift_1(K, left)$\;
     }
\end{algorithm}

\begin{algorithm}[!ht]
\label{alg:decipher}
\SetKwInOut{Input}{input}
\SetKwInOut{Output}{output}
\Input{A key $K$ and cipher block $s$}
\Output{plaintext block}
\BlankLine
 \caption{HCA algorithm - decipher}
     \SetAlgoLined
     $K \leftarrow Shift_{127}(K, left)$\;
     \For{$i \leftarrow 1$ $\KwTo$ $128$}{
        $s \leftarrow Shift_8(s, toggle\_direction(K[256])$\;
        $\phi_{m} \leftarrow createRule_{m}(K)$\;
        $\phi_{b} \leftarrow createRule_{b}(K)$\;
        $s \leftarrow \Phi(s, \phi_{m}, \phi_{b})$\;
        $K \leftarrow Shift_1(K, right)$\;
     }
\end{algorithm}

\section{Reversibility of HCA}
\label{sec:reversibility}

As described in Section \ref{sec:proposedsolution}, for each plaintext block, the HCA symmetric cryptography method employs a series of pre-image calculus operations ($ \lambda $) to cipher it, and a series of CA forward evolution operations ($ \phi $) is used to reverse the resulting cyphertext back to the original plaintext. 

Such mechanism can only be effective if the HCA model provides CA reversibility, so that any lattice configuration will only have a single pre-image. And thus, a required formal analysis of the reversibility property for HCA is provided in this section. 

\begin{theorem}
\label{theorem:reversibility}
Let $\phi_m$ and $\phi_b$ be, respectively, the main and boundary rules used at any step of
the HCA model. For any configuration $s^{t}$, $s^{t}$ has one and only one pre-image, which is $s^{t-1}$.
\end{theorem}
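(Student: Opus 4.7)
The approach is to give a constructive proof of both existence and uniqueness simultaneously: I would show that the sequential pre-image procedure described in Section \ref{sec:proposedsolution} always terminates and that each bit it assigns is forced, thereby producing the one and only $s^{t-1}$ with $\Phi(s^{t-1},\phi_m,\phi_b)=s^{t}$. The key observations to be exploited are (i) the boundary rule $\phi_b$ is chosen from a very restricted family in which the output bit is a fixed invertible function of a single extremity cell, and (ii) the toggle property of $\phi_m$ means that for any fixed values of the eight non-extreme cells of a neighborhood, the map from the extreme cell to the output is a bijection.

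First I would handle the border bits. For each of the four admissible $\phi_b$'s, the mapping reduces to ``output equals leftmost (or rightmost) cell, possibly complemented,'' independently of the other eight cells. Hence for each forward-evolution position $j\in\{0,\ldots,7\}$ (where $\phi_b$ is applied), the extreme cell of its neighborhood in $s^{t-1}$ is uniquely recovered from $s^{t}[j]$ by Expression (\ref{bitPreImageBorderLeft}) or (\ref{bitPreImageBorderRight}). This uniquely fixes a block of eight consecutive cells of $s^{t-1}$ on the periodic lattice (positions $\{124,\ldots,127,0,\ldots,3\}$ in the left-toggle case; symmetric for right-toggle).

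Second, I would treat the 120 main bits inductively in the order prescribed by Figure \ref{fig-ordemCalculoPreImagem}. For the left-toggle case, that order is $i=123,122,\ldots,4$. I would show by induction on the iteration count that when $s^{t-1}[i]$ is processed, all eight trailing cells $s^{t-1}[(i+1)\bmod 128],\ldots,s^{t-1}[(i+8)\bmod 128]$ are already assigned: in the base case they are exactly the border block just computed, and in the inductive step they consist of a shrinking tail of border cells together with main cells determined in earlier iterations. The toggle property of $\phi_m$ then forces a unique value for $s^{t-1}[i]$ via Expression (\ref{bitPreImageMainLeft}), and the induction continues until $i=4$. At that point all 128 cells of $s^{t-1}$ are determined, and because every assignment was made by solving exactly one of the 128 forward-evolution equations at the unique admissible value, one has $\Phi(s^{t-1},\phi_m,\phi_b)=s^{t}$ and no alternative pre-image can exist.

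The step I expect to require the most care is the inductive bookkeeping on the periodic lattice: I need to confirm that the eight-cell border block is positioned precisely so that the first main iteration sees a full trailing window of already-known cells, and that the induction closes cleanly at $i=4$ without leaving any forward-evolution equation unsatisfied or producing an inconsistency from the wrap-around. This reduces to checking that the border region supplies exactly the eight seed bits needed to initiate the recursion and that the main-bit iterations consume exactly one forward-evolution equation per step, so that the equations indexed by positions $\{0,\ldots,7\}$ (resolved by $\phi_b$) and $\{8,\ldots,127\}$ (resolved by $\phi_m$) partition the 128 constraints and match the 128 unknowns. The right-toggle case then follows by the mirror-image argument.
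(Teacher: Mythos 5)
Your proposal is correct and follows essentially the same route as the paper's own proof: invert the copy/complement border rule $\phi_b$ to fix the eight seed cells ($s^{t-1}[124,\dots,3]$ in the left-toggle case), then use the toggle property of $\phi_m$ to force the remaining cells one by one from position $123$ down to $4$, with the right-toggle case handled by symmetry. Your explicit induction and the accounting that the $128$ forward-evolution equations exactly match the $128$ unknowns is just a more careful write-up of the same argument.
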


\begin{proof}
Let $s^{t}[0, \dots, 7]$ be the cells in $s^{t}$ under $\phi_b$, and $s^{t}[8, \dots, 127]$ the cells in $s^{t}$ under $\phi_m$.
For simplicity, we consider only that case of rule application over the lattice cells.
This is done without loss of generality due to the toroidal arrangement (wrap-around) of the CA lattice: any case can be shifted to fit this description.
Let $s^{t-1}[0, \dots, 127]$ be the cells in $s^{t-1}$.
Also w.l.o.g., let us consider only left-toggle rules.

By definition, we have that left-toggle rules necessarily change their output
when the left-most input bit of the neighborhood is changed, if the others bits remain unchanged. From the definition of the HCA model, we know that $\phi_m$ is left-toggle, and that $\phi_b$ is even stricter: only the left-most bit matters when computing the output, that is, the rule either always copies or always inverts the left-most input bit. Thus, even though $s^{t}[0]$ is the result of the application of $\phi_b$ over the neighborhood $s^{t-1}[124, \dots, 4]$, it depends solely on $s^{t-1}[124]$, as shown in Figure~\ref{fig:rev_1}.

\begin{figure}[!ht]
  \begin{center}
    \includegraphics[width=0.75\columnwidth]{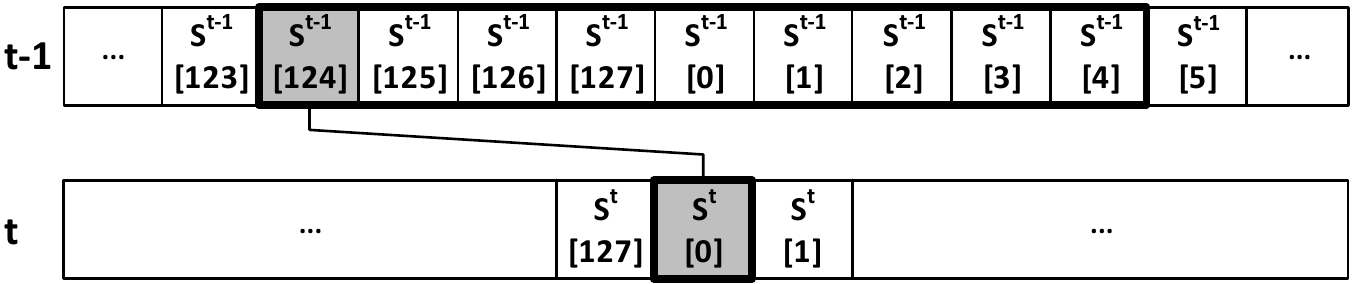}
  \end{center}
  \caption{Pre-image calculus on single bit of the border.}
  \label{fig:rev_1}
\end{figure}

Therefore, $s^{t}[0] = \phi_{b}(s^{t-1}[124])$, but also $s^{t-1}[124] = \phi_{b}(s^{t}[0])$, since $\phi_{b}$ can only express a copy or complement operation. The same is true for each lattice cell in $s^{t}[0, \dots, 7]$ with respect to each lattice cell in $s^{t-1}[124, \dots, 3]$. Therefore, we can apply $\phi_b$ over each element in $s^{t}[0, \dots, 7]$ to uniquely determine $s^{t-1}[124, \dots, 3]$, as shown in Figure~\ref{fig:rev_2}.

\begin{figure}[!ht]
  \begin{center}
    \includegraphics[width=0.9\columnwidth]{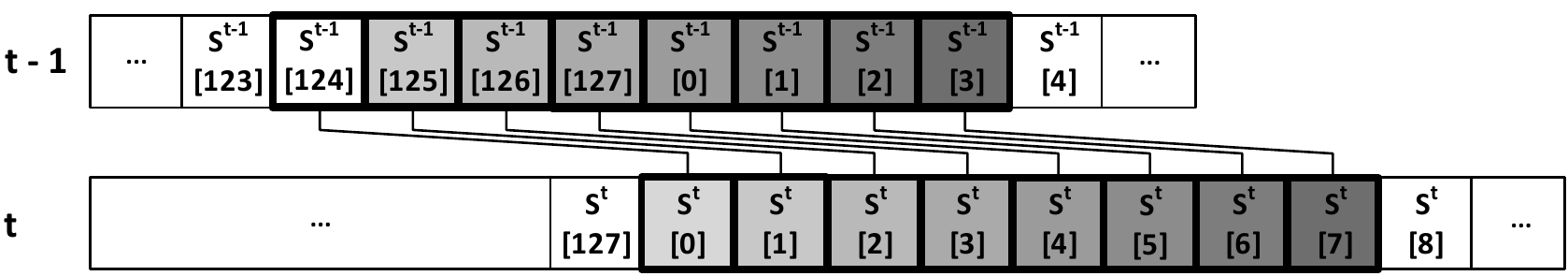}
  \end{center}
  \caption{Pre-image calculus on all bits of the border.}
  \label{fig:rev_2}
\end{figure}

Cell $s^{t}[127]$ is the output of rule $\phi_{m}$ on $s^{t-1}[123, \dots, 3]$. Since $s^{t-1}[124, \dots, 3]$ are already known, $s^{t-1}[123]$ can be computed by checking which bit value placed in $s^{t-1}[123]$ would lead to $\phi_{m}(s^{t-1}[123, \dots, 3]) = s^{t}[127]$, pictured in Figure~\ref{fig:rev_3}.

\begin{figure}[!ht]
  \begin{center}
    \includegraphics[width=0.8\columnwidth]{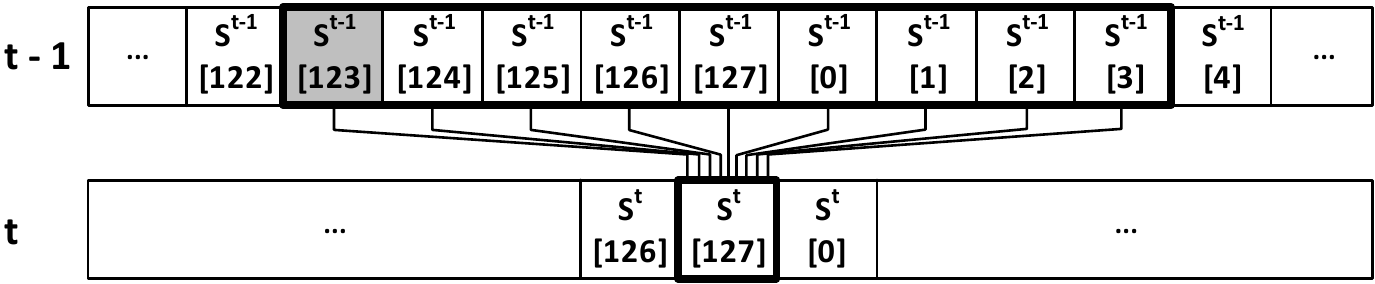}
  \end{center}
  \caption{Pre-image calculus in a main bit.}
  \label{fig:rev_3}
\end{figure}

Since $\phi_{m}$ is a left-toggle rule, any value change in $s^{t-1}[123]$ would result in a change to $s^{t}[127]$, therefore $s^{t-1}[123]$ is unique. Cells $s^{t-1}[122, 121, \dots, 5, 4]$ are sequentially computed in an analogous manner and therefore are also unique.

Since the pre-image $s^{t-1}$ is computed deterministically and uniquely, we have that the theorem holds.

The proof is analogous for the right-toggle rules case, one of the main changes in said case is that main bits are computed from the left to the right, again without loss of generality due to the wrapping property.
\end{proof}

From Theorem~\ref{theorem:reversibility} it follows that the HCA model is reversible.

\section{Formal Analysis}
\label{sec:formal}

An investigation developed in this article was the transformation of HCA calculus into a graph of a deterministic finite automata DFA with output, to investigate the safety potential of the HCA seeking to relate it to a theoretical model. The forward temporal evolution of a standard CA (homogeneous, synchronous and periodic contour) can be transformed into a DFA with output, regardless of the applied rule, being possible to model it as a Moore Machine or Mealy Machine [\cite{sutner:91}]. The Moore Machine modelling for HCA has a finite set of $Q$ states, an initial state $s_0$, an input alphabet $\Sigma_1 = \{0,1\}$, an output alphabet $\Sigma_2 = \{0,1,\varepsilon\}$, a set of transitions $\delta$: $Q$ x $\Sigma_1 \rightarrow Q$ is a function $\lambda: Q  \rightarrow \Sigma_2$ that defines the output associated with each state. In this model, the first bits have empty output ($\varepsilon$) and corresponds to the $m-1$ bits. From the $ m$-th state, we have the contour rule (state $c_i$), e.g. rule 15: \{$11110000$\}. The output associated with the state of the automaton is related to the output of transition rule (state $q_i$), the toggle rule, e.g. \{$01111000$\} (rule 30). Thus, the graph shown in Figure \ref{fig:mooreMachine} (left) models the HCA cipher. The HCA backward step can also be modeled by a Moore Machine (see Figure \ref{fig:mooreMachine} (right)). The topology of the graph varies according to the CA transition rule.

\begin{figure}[!ht]
  \begin{center}
    \includegraphics[width=\columnwidth]{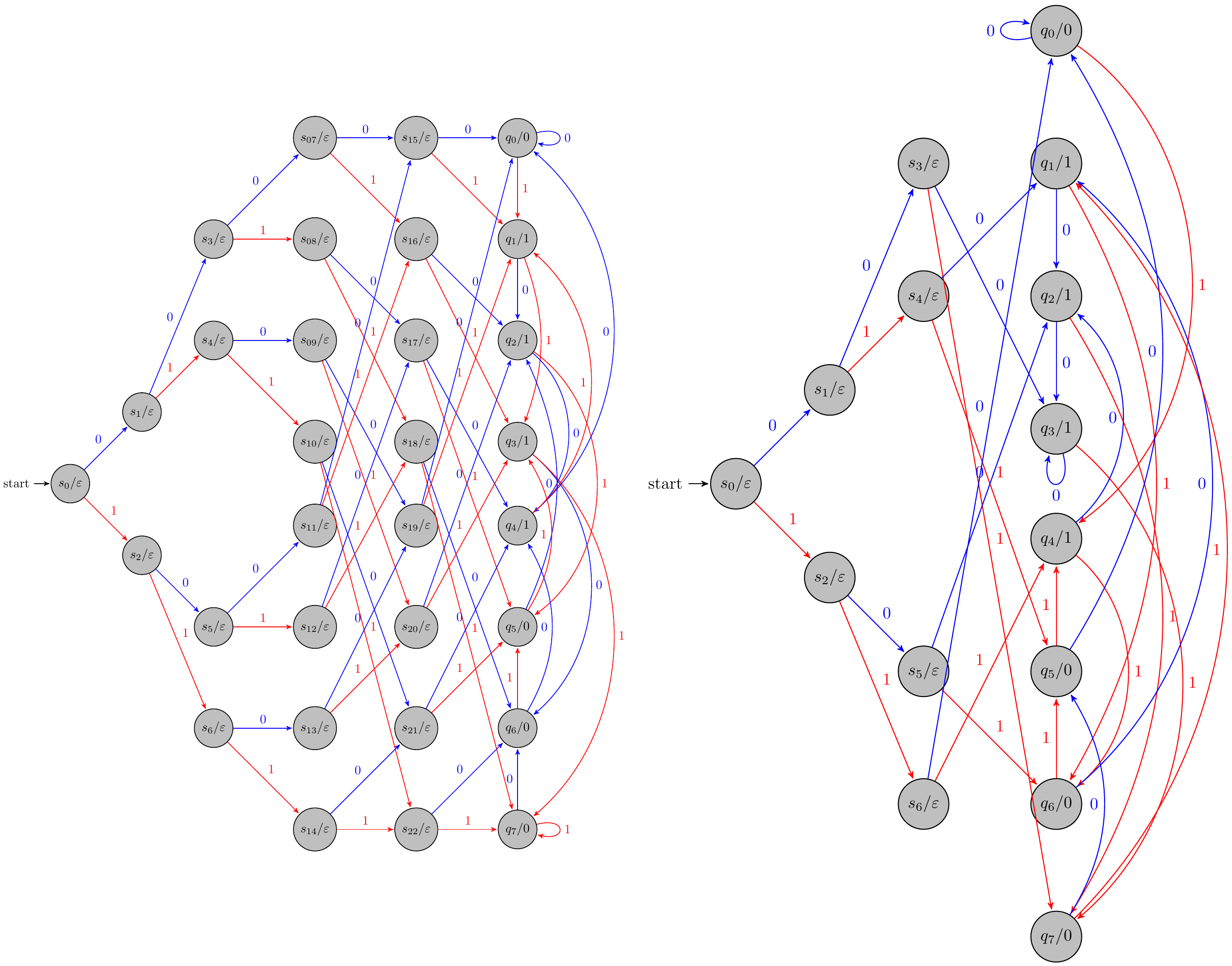}
  \end{center}
  \caption{Moore machine. Left: forward execution (cipher). Right: backward (decryption).}
  \label{fig:mooreMachine}
\end{figure}

Consider that the left-toggle rule ($r=1$) and suppose $\{000 \rightarrow b_0$, $001 \rightarrow b_1$, $010 \rightarrow b_2$, $011 \rightarrow b_3$, $100 \rightarrow b_4$, $101 \rightarrow b_5$, $110 \rightarrow b_6$, $111 \rightarrow b_7\}$. We consider $b_0 = \overline{b_3}$, $b_1 = \overline{b_5}$, $b_2 = \overline{b_6}$ e $b_3 = \overline{b_7}$. The calculus from an initial lattice ($[I_1, I_2, I_3, I_4, I_5, I_6, I_7, I_8]$ corresponds to $[1, 0, 1, 1, 0, 1, 1, 0]$). The pre-image ($[P_1, P_2, P_3, P_4, P_5, P_6, P_7, P_8, P_9, P_{10}]$ corresponds to $[\_, \_, \_, \_, \_, \_, \_, ?, 0, 1]$). The cells $P_9 = 0$ and $P_{10} = 1$ are known. This rule corresponds to $\{?00 \rightarrow 0$, $?00 \rightarrow 1$, $?01 \rightarrow 1$, $?01 \rightarrow 1$, $?10 \rightarrow 1$, $?10 \rightarrow 0$, $?11 \rightarrow 0$, $?11 \rightarrow 0\}$. Considering that the partial neighborhood are $P_9$ e $P_{10}$ and output bit is $I_8$, we have the triplet ($P_9P_{10}I_8 \rightarrow P_8$). If the output bit sequence of the left-toggle rule is $b_0b_1b_2b_3b_4b_5b_6b_7$ it is the same rule previously defined. Thus, if the direct rule is rule 30: $\{000 \rightarrow 1$, $001 \rightarrow 0$, $010 \rightarrow 0$, $011 \rightarrow 1$, $100 \rightarrow 1$, $101 \rightarrow 0$, $110 \rightarrow 1$, $111 \rightarrow 0\}$. 

Therefore, given the direct rule, the same rule can be used as the inverse. 
To exemplify a backward step, the toggle rule is: 30 \{01111000\} and the contour rule is: 15 \{11110000\}. The initial state is $ s_0 $ and the input is $\{0100101\}$. The first symbol to be read is ``1'' which is positioned below $ P_2$. Thus, the machine goes to state $ s_2 $ and returns 0 as output (contour rule). The process repeats until the entire tape is read. The machine will scroll (see Figure \ref{fig:mooreMachine} (right)): ($s_0, s_2, s_5, q_6, q_1, q_6, q_1, q_2$) and will output the sequence: \{0110101\} representing the pre-image relative to the initial lattice. 

Although the graphs obtained by the HCA model (forward and backward steps) are different from the CA graphs [\cite{sutner:91}], which does not present the contour rule in its topology, the conclusion of our study is that we must concentrate in the cyclic portion of the graphs, represented by the nodes $q_i$ and their transitions. In this case, there is no distinction between the graph of Figure \ref{fig:mooreMachine} (right) and the graph of a homogeneous AC, since both are based on rule 30. The safety of the method must be analyzed in relation to this cyclic part, which represents the main rule processing in the HCA method.

Cryptography inherently requires average-case intractability, it means that problems for which random instances have a very hard solution [\cite{peikert2016decade}]. This is substantially different from the notion of hardness usually considered in algorithm theory and NP-completeness, where a problem is considered difficult if there are only a few intractable instances. There are many problems that are hard in the worst case but are easier on the average. Especially for distributions that produce instances having some extra structure, e.g., the existence of a secret key for decryption. In HCA we avoided the structured periodical, fixed-point or null rules, this is the main reason why we only use a set $K$ of rules with entropy greater than $s> 0.75$, that represents chaotic rules or complexity rules (in the edge of chaos rules). In works [\cite{ajtai1996generating, ajtai1998shortest, ajtai1999generating, ajtai2007first}] the authors gave a connection between the worst case and the average case for problems instances. The authors proved that certain problems are hard on the average. On the other hand, other algorithms are hard only in worst cases. Using results of this kind, it is possible to design cryptography constructions and prove that they are infeasible to break, except when all instances of certain problems are simple to solve [\cite{peikert2016decade}]. Another public-key cipher scheme was proposed in [\cite{lin1995new}], using keys that can be easily generated. For security analysis, the authors examined some possible attacks including the integer knapsack problem is reduced to the linear Diophantine equation problem with the reduction process. The RSA [\cite{rivest1978method}] is an asymmetric key cryptography algorithm based on theories of numbers, which its safety is based on the difficulty of factorize large numbers.

Herein, the purpose is to analyze the encryption algorithm HCA in an attempt to associate it with an NP-Complete problem. As it was possible to model the main HCA step (pre-image calculation), as a graph, we believe that problems studied in Graph Theory, help us to find this association with a NP problem. A property that has already been found is the fact that the graph of the sensible rules has a Hamiltonian circuit, which is observed that all graphs have 2 Hamiltonian circuits (using $q_i$ states). As an example, in the graph of Figure \ref{fig:mooreMachine} (right), associated with rule 30, there are 2 Hamiltonian circuits: $(q_0, q_4, q_6, q_1, q_2, q_3, q_7, q_5, q_0)$ and $(q_0, q_4, q_2, q_3, q_7, q_1, q_6, q_5, q_0)$. In addition to the verification of this property, all rule output bits would be recorded on the tape not necessarily in the order of the rule. Thus, by analyzing a single pre-image calculation step, if we knew the state's Hamiltonian circuit, it would be possible to discover the bits of the key. But reordering these bits to find out the correct key is a permutation operation between the bits, which makes this problem of order $O(n!)$, considered a NP-problem. Find the Hamiltonian circuit in a graph is a NP-complete problem. Additionally, all the pre-HCA graphs share a very particular structure: the vertex degree is even (that could be a trivial graph). We believe that the existence of the Hamiltonian circuit in the pre-image graph could be one evidence of the safety of the method, but does not prove the safety by itself. In this way, other forms of method safety analysis were evaluated herein to prove their shuffling and safety power.

\section{Security}
\label{sec:security}

In the literature there are many methods that help evaluating how secure a cryptographic algorithm really is, and some that apply to symmetric cryptography will be listed and explained in this section.

\subsection{Ciphertext Information Entropy}
\label{sec:security_ciphertextEntropy}

In 1948, Claude E. Shannon, also known as the father of information theory, introduced the concept of Information Entropy [\cite{shannon1948}]. This concept can be seen as a way to measure the diversity of a certain event in a series of events. 

A normalized Expression (\ref{eq_1}) was presented in Section \ref{sec:proposedsolution} and used to select valid cryptographic keys for HCA. When using this expression to evaluate diversity in data, a higher result closer to $1.0$ indicates more diversity which is desirable in a cryptographic context. Information entropy is used to measure cryptographic strength in papers such as [\cite{ahmad2009, sun2010, blackledge2013}].

\subsection{Avalanche Effect}
\label{sec:security_avalanche}

Initially coined by Horst Feistel [\cite{feistel1973}], the ``Avalanche Effect'' is an expected propriety in cryptographic systems which can be measured in two cases [\cite{gustafson1994}]:
\begin{itemize}
\item \textbf{Plaintext Avalanche}: Using the same key, what is the impact of flipping a single bit in the plaintext?
\item \textbf{Key Avalanche}: Using the same plaintext, what is the impact of flipping a single bit in the key?
\end{itemize}
This procedure is detailed in Figure \ref{fig-avalancheDetailed}. When the Plaintext Avalanche is being evaluated we have $K = K'$, otherwise when Key Avalanche is being measured we have $X = X'$. For both cases, $Z = Y \oplus Y'$.

\begin{figure}[!ht]
  \begin{center}
    \includegraphics[width=0.75\columnwidth]{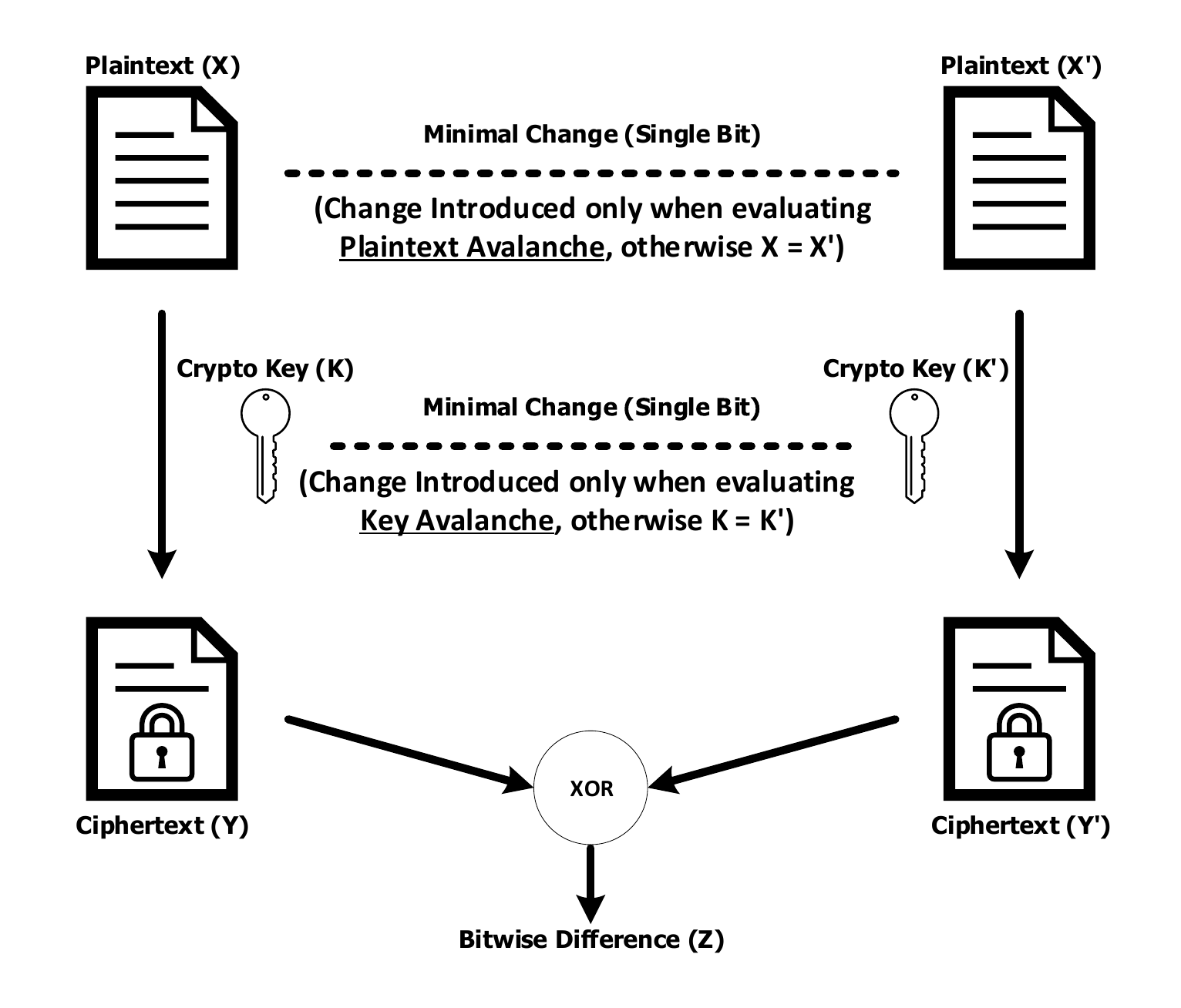}
  \end{center}
  \caption{Avalanche Effect evaluation explanation.}
  \label{fig-avalancheDetailed}
\end{figure}

If an algorithm does not exhibit sufficient Avalanche Effect compliance it would be extremely vulnerable to chosen-plaintext attacks, so this kind of analysis is regarded as a conventional test ran to evaluate cryptographic algorithms' strength [\cite{ramanujam2011, mishra2011, nadu2018}]. The method presented in Section \ref{sec:proposedsolution} was tested according to the specifications listed in \ref{sec:security_avalancheStdDeviation} and \ref{sec:security_avalancheEntropy}. Results are presented on Section \ref{sec:experiments_avalancheEffect}.

\subsubsection{Avalanche Effect - Standard Deviation Analysis}
\label{sec:security_avalancheStdDeviation}

If the algorithm presents strong Avalanche Effect, then we should expect the minimal difference between X and X' (for the Plaintext Avalanche test) or between K and K' (for the Key Avalanche test) to cause a significant difference between ciphertexts Y and Y' and thus, in ideal conditions, the percentage of `1' bits in Z should be around 50\% (as would also be expected from a randomly generated binary sequence). It would also be relevant to know, considering many distinct avalanche evaluations in a diverse population, how consistent are these results. In this case, the standard deviation analysis is considered a proper way to measure this, and a lower StdDev value would be desirable.

\subsubsection{Avalanche Effect - Entropy Analysis}
\label{sec:security_avalancheEntropy}

When evaluating the Avalanche Effect on an algorithm, besides counting how many bits of the ciphertext were affected by a minimal change, its also important to quantify how well propagated were the effects of said change. It would be desirable for this impact (bits changed) to be strongly distributed through the entire resulting ciphertext, and the concept of Information Entropy presented in \ref{sec:security_ciphertextEntropy} is a means to evaluate this.

So, when using the normalized entropy formula \ref{eq_1} to analyze sets of Z strings obtained from many Avalanche experiments, resulting values closer to $1.00$ are desirable, as they would indicate the propagated changes were highly dispersed across the resulting ciphertext. Meanwhile, a result close to $0.00$ is highly undesirable since it means the initial change made small or no difference in the ciphertext, or that it caused all the bits in Y and Y' to be the exact opposites.

\subsection{Birthday Attack}
\label{sec:security_birthdayAttack}

The Birthday Attack is a standard cryptanalytic technique in which reduction functions, such as hash operations, are analyzed for possible vulnerabilities based on the likelihood of collisions. Since there are no evident reduction functions in HCA, this kind of analysis does not apply to the algorithm.

\subsection{Meier-Staffelbach Attack}
\label{sec:security_meierAttack}

In [\cite{wolfram1986}], Stephen Wolfram proposed that CA rule 30 could be used as basis for a good PRNG (Pseudo-Random Number Generator) and thus a encryption mechanism could potentially be devised from it. But this proposal was further investigated by other authors such as Willi Meier and Othmar Stdelbach in [\cite{meier1991}], who found a vulnerability in the perceived randomness of rule 30 which we call the `Meier-Staffelbach Attack'.

This vulnerability was related to the specific design of rule 30 and, as other rules showed distinct behaviors, some of them were used in other similar cryptographic algorithms proposed since then, such as [\cite{nandi1994}] and [\cite{tomassini2000}].

The CA rules used in our algorithm are dynamically chosen according to the provided cryptographic key and are also switched at each step due to the key circular shift mechanism described in Section \ref{sec:proposedsolution}. So, even if a rule with a weakness similar to rule 30 was to be used in part of the process, it would not be the only applied rule. These characteristics allow us to assert our algorithm is not vulnerable to this kind of attack.

\subsection{Linear Cryptanalysis}
\label{sec:security_linearCryptanalysis}

In [\cite{matsui1993}], Mitsuru Matsui proposed `Linear Cryptanalysis', a known-plaintext attack in which the attacker tries to find a linear expression that embodies the differences between a plaintext and its resulting ciphertext in order to understand the cryptographic algorithm and then exploit its vulnerabilities.

The binary transformations applied during the encryption process are directly related to which CA rules are being used to evolve the plaintext, and since many of these CA rules provide nonlinear transformations, it would not be possible to find simple linear expressions that conveys the encryption mechanism. And thus linear cryptanalisis would not be a viable attack against HCA.

\subsection{Diferential Cryptanalysis}
\label{sec:security_diferentialCryptanalysis}

The `Diferential Cryptanalysis' attack was proposed in [\cite{biham1991}] to exploit vulnerabilities in the DES (Data Encryption Standard) algorithm [\cite{fips1999}]. This attack is based on studying how changes in the plaintext affect the resulting ciphertext.

If small changes in the plaintext cause limited effects on the ciphertext, this could be exploited as a vulnerability. If an encryption algorithm displays a high level ``Avalanche Effect" (described in \ref{sec:security_avalanche}) it can be considered safe against the diferential cryptanalysis attack.

\subsection{NIST PRNG Statistical Test Suite}
\label{sec:security_nistSuite}

The NIST (National Institute of Standards and Technology) is an American Institute founded in 1901 that provides guidelines on security and innovation. In 2010, NIST released their latest version of a Statistical Test Suite that evaluates the statistical quality of sequences generated by pseudorandom number generators (PRNG).

Since there is a known correlation between PRNG and encryption, this test suite is also being used to measure the quality of encryption algorithms by evaluating the statistical difference between the plaintext and its resulting ciphertext.

The NIST suite consists of 15 tests, and each test can be comprised of many subtests, which is why the suite is sometimes listed as having 15 tests [\cite{lakra2018}] and in other times as being a set of 188 or more tests [\cite{manzoni2018}].

\section{Evaluations}
\label{sec:experiments}

\subsection{Avalanche Effect}
\label{sec:experiments_avalancheEffect}

The Avalanche Effect test was applied to HCA in the following conditions:
\begin{itemize}
\item The initial AC lattice is comprised of N bits (128, 256 or 512 bits)
\item For each N value, $N^{2}$ random initial lattices are generated
\item Each execution ran for N evolution steps
\end{itemize}

Each of the $N^{2}$ randomly generated initial lattices for each N value was encrypted by the HCA algorithm using random valid HCA cryptographic keys (with spatial entropy $> 0.75$) and the results are presented in Table \ref{table-hca_r4_avalanche_statistics}. In each table, values obtained from $N^{2}$ N-sized sequences generated by a generic PRNG are included for comparison purposes.

\begin{table}[!ht]
\centering
\caption{HCA - Avalanche Effect Result Statistics}
\label{table-hca_r4_avalanche_statistics}
\begin{tabular}{|c|c|c|c|c|c|c|}
\hline
{\textbf{N}} & \multicolumn{2}{c|}{\textbf{HCA - Text Aval.}} & \multicolumn{2}{c|}{\textbf{HCA - Key Aval.}} & \multicolumn{2}{c|}{\textbf{RNG}} \\ \cline{2-7} & \textbf{Avg (\%)} & \textbf{$\sigma$} & \textbf{Avg (\%)} & \textbf{$\sigma$} & \textbf{Avg (\%)} & \textbf{$\sigma$} \\ \hline
\textbf{128} & 49.937 & 4.399 & 50.064 & 4.422 & 50.034 & 4.464 \\ \hline
\textbf{256} & 50.036 & 3.113 & 49.940 & 3.125 & 49.999 & 3.133 \\ \hline
\textbf{512} & 49.994 & 2.199 & 50.016 & 2.217 & 50.002 & 2.202 \\ \hline
\textbf{AVG} & 49.989 & 3.237 & 50.007 & 3.255 & 50.012 & 3.266 \\ \hline
\end{tabular}
\end{table}

In Table \ref{table-hca_r4_avalanche_statistics} the test results are displayed for the plaintext and key avalanche evaluations. A good encryption algorithm should present a modification rate in the final ciphertext around 50\%, with a low standard deviation ($\sigma$), as is the case in all average values found for both instances. The result values are also similar to the bit distribution rate in the PRNG generated sequences.

Its also important to ensure a random spatial dispersion of the changed bits, so an spatial entropy analysis is done on the resulting difference (XOR) lattice. These results follow in Tables \ref{table-hca_r4_plaintext_entropy} and \ref{table-hca_r4_key_entropy} for the plaintext and key avalanche evaluations, respectively.

\begin{table}[!ht]
\centering
\caption{HCA - Plaintext Avalanche Entropy Analysis}
\label{table-hca_r4_plaintext_entropy}
\begin{tabular}{|c|c|c|c|c|c|c|c|c|}
\hline
{\textbf{N}} & \multicolumn{4}{c|}{\textbf{HCA}} & \multicolumn{4}{c|}{\textbf{RNG}} \\ \cline{2-9} & \textbf{Min} & \textbf{Max} & \textbf{Avg} & \textbf{$\sigma$} & \textbf{Min} & \textbf{Max} & \textbf{Avg} & \textbf{$\sigma$} \\ \hline
\textbf{128} & 0.783 & 0.940 & 0.883 & 0.018 & 0.794 & 0.938 & 0.883 & 0.018 \\ \hline
\textbf{256} & 0.849 & 0.936 & 0.897 & 0.011 & 0.843 & 0.940 & 0.897 & 0.011 \\ \hline
\textbf{512} & 0.877 & 0.931 & 0.908 & 0.007 & 0.874 & 0.933 & 0.908 & 0.007 \\ \hline
\textbf{AVG} & 0.836 & 0.936 & 0.896 & 0.012 & 0.837 & 0.937 & 0.896 & 0.012 \\ \hline
\end{tabular}
\end{table}

\begin{table}[!ht]
\centering
\caption{HCA - Key Avalanche Entropy Analysis}
\label{table-hca_r4_key_entropy}
\begin{tabular}{|c|c|c|c|c|c|c|c|c|}
\hline
{\textbf{N}} & \multicolumn{4}{c|}{\textbf{HCA}} & \multicolumn{4}{c|}{\textbf{RNG}} \\ \cline{2-9} & \textbf{Min} & \textbf{Max} & \textbf{Avg} & \textbf{$\sigma$} & \textbf{Min} & \textbf{Max} & \textbf{Avg} & \textbf{$\sigma$} \\ \hline
\textbf{128} & 0.797 & 0.939 & 0.883 & 0.018 & 0.794 & 0.938 & 0.883 & 0.018 \\ \hline
\textbf{256} & 0.845 & 0.934 & 0.897 & 0.011 & 0.843 & 0.940 & 0.897 & 0.011 \\ \hline
\textbf{512} & 0.879 & 0.932 & 0.908 & 0.007 & 0.874 & 0.933 & 0.908 & 0.007 \\ \hline
\textbf{AVG} & 0.840 & 0.935 & 0.896 & 0.012 & 0.837 & 0.937 & 0.896 & 0.012 \\ \hline
\end{tabular}
\end{table}

The average entropy results for each tested radius value should be as close as possible to $1.00$ and the evaluated results, presented above, were comparable in all instances to the average entropy found in randomly generated sequences.

\subsection{NIST PRNG Suite}
\label{sec:experiments_nistSuite}

The NIST suite tests were run on sequences that directly convey the changes an encryption method causes on random plaintexts. Each test has a minimum input size recommendation, and since some of them are as large as $10^{6}$ bits, the results presented here were obtained using sequences consisting of 10 Megabytes. The input sequence construction procedure is explained on Figure \ref{fig-nistInput}.

\begin{figure}[!ht]
  \begin{center}
    \includegraphics[width=0.7\columnwidth]{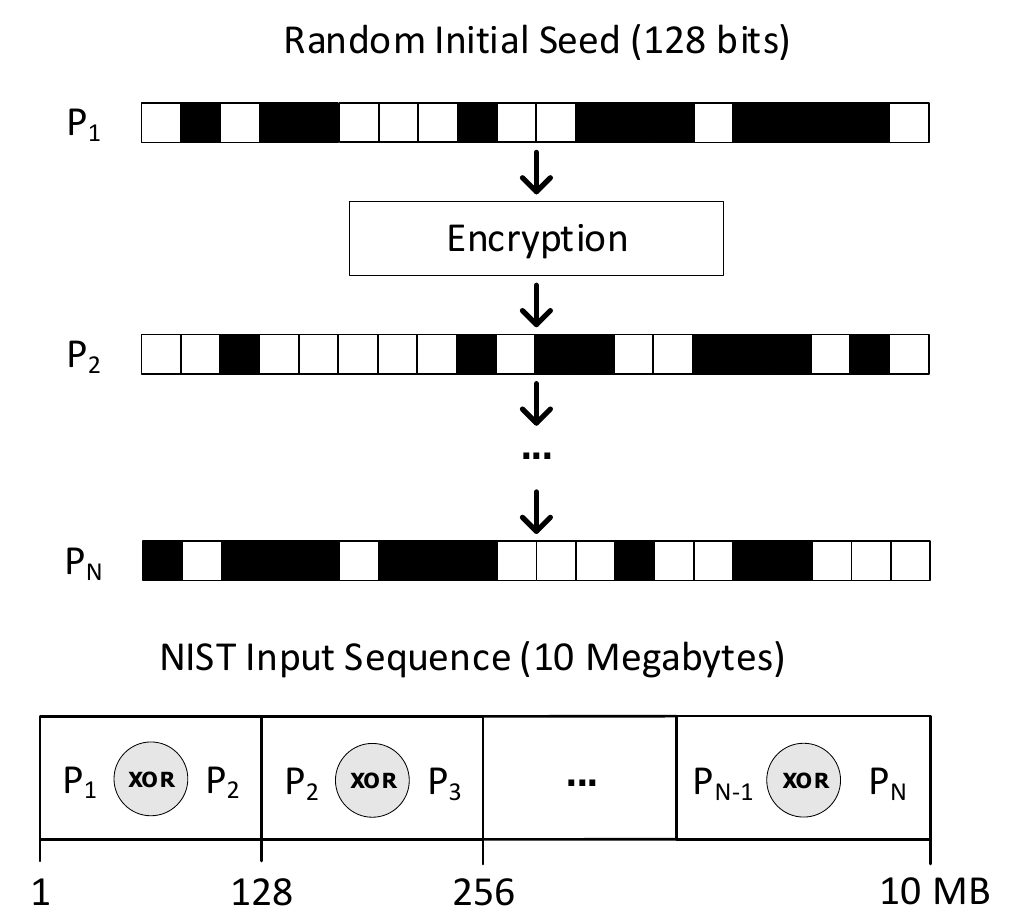}
  \end{center}
  \caption{NIST input sequence building.}
  \label{fig-nistInput}
\end{figure}

As presented in Figure \ref{fig-nistInput}, building each 10 Megabytes sequence begins by initializing a single 128-bit block sized lattice using a pseudo-random seed, this initial plaintext is regarded as ``$P_{1}$''. The encryption algorithm is applied to $P_{1}$, generating a ciphertext called ``$P_{2}$'', and their binary difference, $P_{1} \oplus P_{2}$, represents the effect of the encryption procedure. After $P_{1} \oplus P_{2}$ is calculated, this 128-bit sequence is the first part of the 10 megabytes input sequence used for NIST evaluation; the next part will be $P_{2} \oplus P_{3}$, where ``$P_{3}$'' is the new ciphertext obtained by running the encryption algorithm with $P_{2}$ as the plaintext. This iterative procedure is repeated until the 10 Megabytes sequence is complete by appending the last part, $P_{N-1} \oplus P_{N}$, where, accordingly, $N = (10\:megabytes)/(128\:bits)$.

The NIST evaluation ran for each algorithm was executed for 1,000 distinct 10 Megabytes sequences generated using the procedure listed above. The percentage of passing sequences for each NIST test follow in Table \ref{table-hca_nist}.

\begin{table}[!ht]
\centering
\caption{NIST Suite Tests}
\label{table-hca_nist}
\begin{tabular}{|l|c|c|}
\hline
\multicolumn{1}{|c|}{\textbf{NIST Test}} 					& \textbf{HCA} & \textbf{AES} \ \\ \hline
\textbf{T01 - Frequency (Monobits) Test} 					& 99.1\% 		& 99.1\% 		\\ \hline
\textbf{T02 - Frequency Test within a Block} 				& 99.1\% 		& 99.4\% 		\\ \hline
\textbf{T03 - Runs Test} 									& 99.0\% 		& 98.7\% 		\\ \hline
\textbf{T04 - Test for the Longest Run of Ones in a Block} 	& 98.5\% 		& 98.3\% 		\\ \hline
\textbf{T05 - Binary Matrix Rank Test} 						& 98.8\% 		& 99.2\% 		\\ \hline
\textbf{T06 - Discrete Fourier Transform (Specral) Test} 	& 98.2\% 		& 98.6\% 		\\ \hline
\textbf{T07 - Non-Overlapping Template Matching Test} 		& 97.4\% 		& 98.0\% 		\\ \hline
\textbf{T08 - Overlapping Template Matching Test} 			& 99.0\% 		& 99.2\% 		\\ \hline
\textbf{T09 - Maurer’s “Universal Statistical” Test} 		& 99.1\% 		& 99.0\% 		\\ \hline
\textbf{T10 - Linear Complexity Test} 						& 99.2\% 		& 98.4\% 		\\ \hline
\textbf{T11 - Serial Test} 									& 98.7\% 		& 98.1\% 		\\ \hline
\textbf{T12 - Approximate Entropy Test} 					& 99.3\% 		& 99.1\% 		\\ \hline
\textbf{T13 - Cumulative Sums (Cusum) Test} 				& 98.6\% 		& 98.8\% 		\\ \hline
\textbf{T14 - Random Excursions Test} 						& 93.1\% 		& 93.2\% 		\\ \hline
\textbf{T15 - Random Excursions Variant Test} 				& 93.1\% 		& 92.7\% 		\\ \hline
\end{tabular}
\end{table}

Besides the results found for sequences generated by the HCA method, Table \ref{table-hca_nist} also contains the results for sequences similarly generated using the AES algorithm. The proximity between the passing-rates of both algorithms, for all tests in the NIST suite, suggests HCA is a promising method, since AES is the current symmetric encryption standard.

\section{Conclusions and Future Work}
\label{sec:conclusion}

This paper describes a symmetrical block cipher cryptographic model based on reversible heterogeneous cellular automata that employs two radius-4 toggle-rules. The main rule is chaotic and non-additive; it is applied to the majority of bits at each time step to provide the necessary entropy to the encryption process. The second one is periodic (more specifically, fixed-point with an spatial displacement) and additive; it is applied to a small set of consecutive bits (the lattice border) and is used to ensure the existence of a pre-image. This model was named HCA (Hybrid Cellular Automata) and it was firstly proposed in 2007, when a patent registration was submitted in Brazil [\cite{oliveira2007sistema}]. It is the first time that HCA is presented and evaluated in a wide-range scientific forum. In the past, just the brazilian patent registration (whose process was finalized in 2019) and some local academic works (master's thesis), written in portuguese, have focused on aspects of, and extensions to, the HCA model [\cite{magalhaes2010metodo, lima2012modelo, alt2013propriedades}].

The adopted block size is 128 bits and the secret key has 257 bits, where 256 of them define the main rule to be applied. Moreover, as presented here, forward and backward CA evolution procedures correspond to the decryption and encryption processes, respectively. However, the converse is also possible; HCA enables one to use forward evolution in ciphering, while the receiver must use backward evolution to decipher. In general, forward is faster than backward evolution and in the specification discussed here the receiver will employ the faster process to decipher. It would also be simple to increase the size of the block for 256 bits or more. If one wants to use a larger key space, it is also easy to adapt the model to use radius-5 toggle rules or more; however this would increase the complexity of implementing the solution in HPC systems, such as FPGAs [\cite{halbach2004implementing}].

When compared to other similar CA-based methods [\cite{gutowitz:95, oliveira:04, oliveira:08, oliveira:10, oliveira2010_ppsn, oliveira2011deeper, wuensche2008}] which also apply toggle rules, the reversible model used in the HCA algorithm has the advantage of keeping the ciphertext size equal to the plaintext, whereas being valid for any possible CA initial configuration since the existence of a pre-image is ensured [\cite{oliveira:08}]. As a symmetric algorithm, this model can be applied to any kind of data (text, images, etc.) by defining a safe padding strategy and a secure mode of operation.

The experimental results provide herein evidence that HCA is a robust cryptographic algorithm. This poses a strong argument in favor of further investigating a cryptographic algorithm based on cellular automata due to the inherent parallelism of the model that can be harnessed in proper hardware, in opposition to conventional algorithms such as AES that are mostly serial in nature. Additionally, two theoretical analyses were also presented. The first proves the reversibility of the CA model due to the heterogeneous arrange of the two toggle rules (chaotic and additive). The second one uses Graph Theory to show that the problem of breaking the secret key in HCA could be approximately reduced to the Hamiltonian Cycle Problem (HCP), which is known to belongs to NP-complete class. Despite the fact that HCP general formulation in Graph Theory was proposed for an arbitrary graph, while HCA defines a graph with a specific topology, this analysis points to the robust security of the cryptographic algorithm. 

Despite the parallelism mechanisms of HCA having already been explained in this paper, implementation in specialized hardware was not possible at the time of this publication. The method was implemented in conventional x86 software and only inter-block parallelism, which is available to any block-cipher algorithm, was explored. Therefore, the efficient implementation of HCA in High Performance Computing (HPC) systems, such as FPGA architectures, is an on-going work of our research group. From the theoretical point of view, the estimated time to perform the sequential calculation of HCA encryption (backward) or decryption (forward), for one block of bits, is $\phi_c = m \times N \times T$, where, $m =  2r +1$, $N$ is the lattice size and $T$ is the number of preimage steps. In the specification discussed here [\cite{oliveira2007sistema}], N = T = 128. On the other hand, the estimated time to perform the parallel calculation of HCA decryption for the same block of bits is given by $\phi_d =  (N - m) \times T$ and to perform the parallel HCA encryption is given by $\phi_c =  2T + N - m$. To achieve this theoretical gain, one must use $N$ processing nodes in the architecture. This estimation  ignores the effects of memory access and communication and other practical questions related to parallel implementation. However, it highlights the huge potential of this CA-based model considering HPC, specially FPGAs. 

The conception of the reversibility analysis presented in Section \ref{sec:reversibility} gave insight into the possibility of extending this concept to HCA alternatives with an even higher heterogeneity level. This could also allow the use of rules with radius lower than 4, which would lead to more performance and ease to implement. Another expected development is a forthcoming work that investigates an HCA adaptation using multidimensional cellular automata.

\section*{Declarations}

%

%
\subsection*{Conflict of interest}
The authors declare that they have no conflict of interest.
\subsection*{Availability of data and material}
Not applicable
\subsection*{Code availability}
An implementation of the featured algorithm is made available at https://github.com/evertonrlira/HCA. Any updates will also be published on the linked repository.
\subsection*{Ethics approval}
Not applicable
\subsection*{Consent to participate}
Not applicable
\subsection*{Consent for publication}
The authors declare that they give consent for publication.

\bibliographystyle{spbasic}      
\bibliography{main}   

\end{document}